\def\doi{8(3:21)2012}
\newcommand{\@abbrev}[3]{
  \def\c@a@def##1{
      \if ##1.
        \relax
      \else
        \@ifdefinable{\@nameuse{#1##1}}{\@namedef{#1##1}{#2##1}}
        \expandafter\c@a@def
      \fi
    }
  \c@a@def #3.
}
\newcommand{\val}{\ensuremath{\mathrm{val}}}
\newcommand{\cf}{cf.\xspace}
\newcommand{\eg}{e.g.\xspace}
\newcommand{\ie}{i.e.\xspace}
\newcommand{\muc}{$\mu$-calculus\xspace}
\newcommand{\qmu}{quantitative \muc}
\newcommand{\Qmu}{\ensuremath{Q\mu}\xspace} 
\newcommand{\pzero}{Player~$0$\xspace}
\newcommand{\pone}{Player~$1$\xspace}
\newcommand{\Rinf}{\ensuremath{\mathbb{R}_{\infty}}}
\newcommand{\Qinf}{\ensuremath{\mathbb{Q}_{\infty}}}
\newcommand{\Zinf}{\ensuremath{\mathbb{Z}_{\infty}}}
\newcommand{\setR}{\ensuremath{\mathbb{R}}}
\newcommand{\setQ}{\ensuremath{\mathbb{Q}}}
\newcommand{\setZ}{\ensuremath{\mathbb{Z}}}
\newcommand{\Pot}{\ensuremath{\calP}}
\newcommand{\Func}{\ensuremath{\mathcal{F}}}
\newcommand{\Diam}{\ensuremath{\Diamond}}
\newcommand{\ol}[1]{\ensuremath{\overline{#1}}}
\newcommand{\mc}{\ensuremath{\mathrm{MC}}} 
\newcommand{\sem}[1]{\ensuremath{\llbracket#1\rrbracket}}
\newcommand{\semk}[1]{\ensuremath{\llbracket#1\rrbracket^{\calK}}}
\newcommand{\lint}{\ensuremath{\frakI}} 
\newcommand{\semek}[1]{
  \ensuremath{\llbracket#1\rrbracket_{\lint}^{\calK}}
}
\newcommand{\semsub}[1]{
  \ensuremath{\llbracket#1\rrbracket_{\lint[X \leftarrow f]}^{\calK}}
}
\newcommand{\floor}[1]{\ensuremath{\lfloor #1 \rfloor}}
\newcommand{\ceil}[1]{\ensuremath{\lceil #1 \rceil}}
\newcommand{\intervals}{\ensuremath{\calI(\Rinf)}}
\newcommand{\intervalsplus}{\ensuremath{\calI(\Rinf^{\geq 0})}}
\newcommand{\labels}{\ensuremath{\lambda}}
\newcommand{\successor}{\ensuremath{\mathrm{succ}}}
\newcommand{\coeff}{\ensuremath{\delta}} 
\newcommand{\indexi}{\ensuremath{\iota}}
\newcommand{\strat}{\ensuremath{\sigma}} 
\newcommand{\pay}{\ensuremath{\mathrm{p}}} 
\newcommand{\di}{\ensuremath{\mathrm{d}}} 
\newcommand{\dih}{\ensuremath{\mathrm{d^*}}}
\newcommand{\corr}{\ensuremath{\mathrm{c}}}
\newcommand{\dleft}{\ensuremath{\mathrm{d}_l}}
\newcommand{\dright}{\ensuremath{\mathrm{d}_r}}
\newcommand{\loc}{\ensuremath{\mathrm{loc}}} 
\newcommand{\var}{\ensuremath{\mathrm{var}}} 
\renewcommand\epsilon{\varepsilon}
\renewcommand\phi{\varphi}
\newcommand{\nin}{\ensuremath{\not \in}}
\renewcommand{\max}{\mathrm{max}}
\newcommand{\cupdot}{\mathbin{\dot{\cup}}}
\newcommand{\closedots}{\mathinner{\!\ldotp\!\ldotp\!\ldotp\!}}
\begin{document}

\title[Model Checking the Quantitative $\mu$-Calculus on
  Linear Hybrid Systems]{Model Checking the Quantitative $\mu$-Calculus on
  Linear Hybrid Systems}
\author[D.~Fischer]{Diana~Fischer\rsuper a}
\address{{\lsuper a}Mathematische Grundlagen der Informatik, RWTH Aachen University}
\email{fischer@logic.rwth-aachen.de}

\author[\L.~~Kaiser]{\L{}ukasz~Kaiser\rsuper b}
\address{{\lsuper b}LIAFA, CNRS \& Universit{\'e} Paris Diderot -- Paris 7}
\email{kaiser@liafa.univ-paris-diderot.fr}

\thanks{Authors were supported by DFG~AlgoSyn~1298 and
  ANR~2010~BLAN~0202~02~FREC}

\keywords{hybrid systems, model checking, $\mu$-calculus, quantitative logics,
  games}
\subjclass{D.2.4, F.4.1}

\begin{abstract}
We study the model-checking problem for a quantitative
extension of the modal $\mu$-calculus on a class of hybrid systems.
Qualitative model checking has been proved decidable and implemented for
several classes of systems, but this is not the case for quantitative
questions that arise naturally in this context. Recently, quantitative
formalisms that subsume classical temporal logics and allow
the measurement of interesting quantitative phenomena were introduced.
We show how a powerful quantitative logic, the quantitative $\mu$-calculus, can
be model checked with arbitrary precision on initialised linear hybrid systems.
To this end, we develop new techniques for the discretisation of continuous
state spaces based on a special class of strategies in model-checking games
and present a reduction to a class of counter parity games.
\end{abstract}

\maketitle

\section{Introduction}

Modelling discrete-continuous systems by a hybrid of a discrete transition
system and continuous variables which evolve according to a set of
differential equations is widely accepted in engineering.
While model-checking techniques have been applied to verify safety, liveness
and other temporal properties of such systems \cite{ACHHHNOSY95,HHM99,HKPV95},
it is also interesting to infer quantitative values for certain queries.
For example, one may not only want to check that a variable of a system
does not exceed a given threshold, but also to compute the maximum value
of the variable over all runs, checking whether any such threshold exists.

Thus far, quantitative testing of hybrid systems has only been done
by simulation, and hence lacks the strong guarantees which can be given
by model checking. In recent years, there has been a strong interest
in extending classical model-checking techniques and logics to the quantitative
setting. Several quantitative temporal logics have been introduced,
see \eg  \cite{Alfaro03,AlfaroFS04,AlfaroM04,FGK10,seidl07,Hugo07,McIver},
together with model-checking algorithms for simple classes of systems,
such as finite transition systems with discounts. Still, none of those
systems allowed for dynamically changing continuous variables.
We present the first model-checking algorithm for a non-stochastic quantitative
temporal logic on a class of hybrid systems.
The logic we consider, the quantitative $\mu$-calculus \cite{FGK10},
is based on a formalism first introduced in \cite{AlfaroFS04}.
It properly subsumes the standard $\mu$-calculus, \cf \cite{BradfieldS01}, and
thus also CTL and LTL.
Therefore the present result, namely that it is possible to
model check quantitative $\mu$-calculus on initialised linear
hybrid systems, properly generalises a previous result on
model checking LTL on such systems \cite{HHM99,HKPV95}, which is
one of the strongest model-checking results for hybrid systems.

The restriction to initialised linear systems is made because verification
of temporal properties over general hybrid systems is undecidable. This
holds even for linear systems, thus one must pick an appropriate abstraction
of the system. An established and very well-studied way to do this is to
first approximate the continuous behaviour of the variables by linear
behaviour in a finite number of intervals.
This method, applied to a number of functions $f_1(x), \ldots, f_m(x)$ that
evolve according to a set of arbitrary differential equations
$\calD(f_1, \ldots, f_m)$, generates a set of disjoint intervals
$I_1, \ldots, I_k$ with $I_1 \cup \ldots \cup I_k = \setR$ and
a set of linear coefficients $a_i^j, b_i^j$ such that in $I_j$ it is
approximately true that $f_i(x) = a_i^j \cdot x + b_i^j$, \ie the
derivative $\frac {df_i}{dt} = a_i^j$. There are several
ways to generate such linear approximations of solutions of differential
equations and, depending on the method in question, one can obtain various
kinds of error bounds for the respective classes of functions.
We do not investigate these issues (or other approximation methods) here,
but focus instead on the linear system obtained. 

As stated above, even simple
qualitative verification problems are undecidable for general hybrid systems.
This remains true even after the natural approximation by a linear system.
Hence, one more assumption is made, namely that if the speed of evolution of a
variable changes between discrete locations then also the variable is reset
on that transition. Systems with this property, called \emph{initialised}
linear systems, are -- besides o-minimal systems \cite{LPS00,BBC07} and their
recent extensions \cite{VPVD08} -- one of the largest classes of hybrid systems
with decidable temporal logic \cite{HKPV95}.
Observe that when an arbitrary hybrid system is approximated by a linear one,
one can \emph{try} to directly obtain an initialised system by computing
boundary values \cite{HHW96}. This can be done by either assuring that discrete
transitions are taken only at the borders of the intervals $I_j$, or by taking
a finer subdivision of the intervals to increase the precision of coordination
between the discrete and the continuous part of the system. Note that, even
though this procedure has been implemented in model-checking programs, it is
only a heuristic -- it necessarily fails for general systems for which the
model-checking problem is undecidable.

The logic we study is quantitative -- it allows to express properties involving
suprema and infima of values of the considered variables during runs that
satisfy various temporal properties, \eg to answer
``what is the maximal temperature on a run during which a safety condition
holds?''. To model check formulae of the quantitative $\mu$-calculus,
we follow the classical parity game-based approach
and adapt some of the methods developed in the qualitative
case and for timed systems. To our surprise, these methods turned out
not to be sufficient and did not easily generalise to
the quantitative case. As we will show below, the quantitative systems we study
behave in a substantially different way than their qualitative counterparts.
We overcome this problem by working directly with a quantitative
equivalence relation, roughly similar to the region graph for timed automata,
and finally by exploiting a recent result on counter parity games.

\textbf{Organisation.}
The organisation of this paper follows the reductions needed to model check
a formula $\phi$ over a hybrid system $\calK$. 
In Section \ref{sec_hybrid}, we introduce the necessary notation,
the systems and the logic. Then, we present
an appropriate game model in Section \ref{sec_ipg} and show how to 
construct a model-checking game $\calG$ for the system and the formula.
In Section \ref{sec_basic}, we transform the interval games constructed for
arbitrary initialised linear hybrid systems to flat games, where the linear
coefficients are always $1$. In Section \ref{sec_ds}, we show how the strategies
can be discretised and still lead to a good approximation of the original game.
Finally, in Section \ref{sec_pushdown}, we reduce the problem to counter
parity games and exploit a recent result to solve them.
To sum up, the steps taken are depicted below.
\[ \calK,\phi \leadsto \text{ model-checking game } \calG \leadsto
   \text{ flat } \calG \leadsto
   \text{ counter-reset } \calG \leadsto \text{value.} \]

\section{Hybrid Systems and Quantitative Logics} \label{sec_hybrid}
We denote the real and rational numbers and integers extended with both
$\infty$ and $-\infty$ by $\Rinf$, $\Qinf$ and $\Zinf$ respectively.
We write $\calI(\Zinf), \calI(\Qinf)$ and $\intervals$ for all open or closed
intervals over $\Rinf$ with endpoints in $\Zinf, \Qinf$ and $\Rinf$.

\begin{defi}
A \emph{linear hybrid system over $M$ variables},
$\calK = (V, E, \{P_i\}_{i \in J}, \lambda, \coeff)$,
is based on a directed graph $(V,E)$, consisting of a set of locations $V$
and transitions $E \subseteq V \times V$. The labelling function
$\lambda : E \to \Pot_{\mathrm{fin}}(\calL_M)$ assigns to each transition
a finite set of labels. The set $\calL_M$ of \emph{transition labels}
consists of triples $l = (I, \ol{C}, R)$, where the vector
$\ol{C} = (C_1, \ldots, C_M)$ (with $C_i \in \intervals$ for
$i \in \{1, \ldots, M\}$) represents the constraints each of the variables
needs to satisfy for the transition to be allowed, the interval
$I \in \intervalsplus$ represents the possible period of time
that elapses before the transition is taken,
and the reset set $R$ contains the indices of the
variables that are reset during the transition, 
\ie $i \in R$ means that $y_i$ is set to zero.
For each $i$ of the finite index set $J$,
the function $P_i: V \to \Rinf$ assigns to each location
the value of the static quantitative predicate $P_i$.
The function $\coeff\ :\ V \to \setR^{M}$ assigns to each location
and variable $y_i$ the coefficient $a_i$ such that the variable evolves
in this location according to the equation $\frac {dy_i}{dt} = a_i$.
\end{defi}

Please note that although we do not explicitly have any invariants
(or constraints) in locations, we can simulate them by choosing either
the time intervals or variable constraints on the outgoing transitions
accordingly. If the values of predicates and labels range over $\Qinf$
or $\Zinf$ instead of $\Rinf$, we talk about linear hybrid systems
over $\setQ$ and $\setZ$, respectively. 

The \emph{state} of a linear hybrid system $\calK$ is a location
combined with a valuation of all $M$ variables, $S = V \times \Rinf^M$.
For a state $s = (v, y_1, \ldots, y_M)$ we say that a transition
$(v, v') \in E$ is \emph{allowed} by a label $(I, \ol{C}, R) \in \labels((v, v'))$
if $\ol{y} \in \ol{C}$ (\ie if $y_i \in C_i$ for all $i = 1, \ldots, M$).
We say that a state $s' = (v', y_1', \ldots, y_M')$ is a successor of $s$,
denoted $s' \in \successor(s)$, when there is a transition $(v,v') \in E$,
allowed by label $(I, \ol{C}, R)$, such that $y'_i = 0$ for all $i \in R$ and
there is a $t \in I$ such that $y'_i = y_i + (a_i \cdot t)$ where 
$a_i=\coeff_i(v)$ for all $i \not \in R \in \labels((v,v'))$.
A run of a linear hybrid system starting from location $v_0$ is
a sequence of states $s_0, s_1, \ldots$ such that $s_0 = (v_0, 0, \ldots, 0)$
and $s_{i+1} \in \successor(s_i)$ for all $i$. Given two states $s$ and
$s' \in \successor(s)$ and a reset set $R \neq \{1,\ldots,M\}$ we denote by
$s' -_R s$ the increase of the non-reset variables that occurred during the
transition, \ie $\frac {y'_i - y_i} {a_i}$ for some $i \not\in R$ where
$s=(v,\ol{y})$ and $s'=(v',\ol{y}')$. 

\begin{defi}
A linear hybrid system $\calK$ is \emph{initialised} if for each
$(v,w) \in E$ and each variable $y_i$ it holds that if 
$\coeff_i(v) \neq \coeff_i(w)$ then $i \in R$ for $R \in\lambda((v,w))$.
\end{defi}

Intuitively, an initialised system cannot store the value of a variable
whose evolution rate changes from one location to another.

\begin{exa}
To clarify the notions we use, we consider a variant of a standard example 
for a linear hybrid system, the leaking gas burner.

\begin{figure}
\begin{center}
\begin{tikzpicture}
\node[draw,circle] (a) at (-2, 0) {$v_0$};
\node[] () at (-2, 0.75) {{\small{$P=\infty$}}};
\node[] () at (-2.5, -0.75) {\emph{\small{$\frac{dy_0}{dt}=1,\frac{dy_1}{dt}=1$}}};
\node[draw, circle] (b) at (2, 0) {$v_1$};
\node[] () at (2, 0.75) {{\small{$P=-\infty$}}};

\node[] () at (2.5, -0.75) {\emph{\small{$\frac{dy_0}{dt}=1,\frac{dy_1}{dt}=0$}}};
\path (a) edge[bend right,->] node[above] {{\tiny{$[0, 1]$}}}
node[below] {{\tiny{$R=\{y_0\}$}}} (b);
\path (b) edge[bend right,->] node[below] {{\tiny{$[0,\infty)$}}}
  node[above] {{\tiny{$R=\{y_0\}$,$y_0 \in [30,40]$}}} (a);
\end{tikzpicture} 
\end{center}
\caption{Leaking gas burner LHS $\calL = (V, E, P, \lambda,\coeff)$ (not initialised)}\label{leaking-not-in}
\end{figure}

Our version is depicted in Figure \ref{leaking-not-in}. 
This system represents a gas valve that can leak gas to a burner, so it has
two states: $v_0$, where the valve is open (and leaking gas) and $v_1$ where
it is closed. This is also indicated by a qualitative predicate $P$ that
has the value $\infty$ if the gas is leaking (in location $v_0$) and
$-\infty$ otherwise.
The system has two variables. The first variable, $y_0$, is a clock
measuring the time spent in each location, and is reset on each transition,
\ie after each discrete system change. The variable $y_1$ is a stop watch
and measures the total time spent in the leaking location.
Thus, this system is not initialised.
The time intervals on the transitions control the behaviour of the system.
On the transition $(v_0,v_1)$ there are no restrictions on the variables, but
we are only allowed to choose a time unit from $[0,1]$, \ie we can stay a
maximum of one time unit in location $v_0$. On the transition $(v_1,v_0)$ there 
is a restriction on the value of $y_0$, it has to have a value between 30
and 40 for this transition to be allowed, while there is no restriction on
the choice for the time unit (of course, this could also be modelled the other
way around).
Intuitively, the time intervals indicate that the gas valve will
leak gas for a time interval between 0 and 1 seconds and 
then be stopped and that it can only leak again after
at least 30 time units.\\

In Figure \ref{leaking}, we show an initialised version of the leaking gas burner.
The only difference is that $y_1$ is not a stop watch anymore but a normal
clock. Since now both variables are just clocks (which means that their
evolution rates are one everywhere), the system is trivially initialised.

\begin{figure}
\begin{center}
\begin{tikzpicture}
\node[draw,circle] (a) at (-2, 0) {$v_0$};
\node[] () at (-2, 0.75) {{\small{$P=\infty$}}};
\node[] () at (-2.5, -0.75) {\emph{\small{$\frac{dy_0}{dt}=1,\frac{dy_1}{dt}=1$}}};
\node[draw, circle] (b) at (2, 0) {$v_1$};
\node[] () at (2, 0.75) {{\small{$P=-\infty$}}};
\node[] () at (2.5, -0.75) {\emph{\small{$\frac{dy_0}{dt}=1,\frac{dy_1}{dt}=1$}}};
\path (a) edge[bend right,->] node[above] {{\tiny{$[0, 1]$}}}
node[below] {{\tiny{$R=\{y_0\}$}}} (b);
\path (b) edge[bend right,->] node[below] {{\tiny{$[0,\infty)$}}}
  node[above] {{\tiny{$R=\{y_0\}$,$y_0 \in [30,40]$}}} (a);

\end{tikzpicture} 
\end{center}
\caption{Leaking gas burner LHS $\calL = (V, E, P, \lambda,\coeff)$ (initialised)}\label{leaking}
\end{figure}
\end{exa}

\subsection{Quantitative $\mu$-Calculus}
In this section, we present a version of the quantitative $\mu$-calculus first
introduced in \cite{FGK10}.
The version we use here is additive and includes variables. It is evaluated on
linear hybrid systems.

\begin{defi}
Given sets of fixpoint variables $\calX$, system variables $\{y_1, \ldots, y_M\}$ and predicates $\{P_i\}_{i \in J}$,
the formulae of the \emph{quantitative \muc (\Qmu) with variables}
are given by the EBNF grammar:
\[ \phi \ ::= \ P_i \mid X_j \mid y_k \mid 
   \neg \phi \mid \phi \land \phi \mid \phi \lor \phi \mid
   \Box \phi \mid \Diamond \phi 
   \mid \mu X_j.\phi \mid \nu X_j. \phi \, , \]
where $X_j \in \calX, y_k \in \{y_1, \ldots, y_M\}$, and in the cases
$\mu X_j.\phi$ and $\nu X_j. \phi$, the variable $X_j$ must appear positively
in $\phi$, \ie under an even number of negations.
\end{defi}

Let $\Func = \{f\ :\ S \to \Rinf\}$.
Given an interpretation $\lint : {\calX} \to \Func$,
a variable $X \in \calX$, and a function $f \in \Func$, we denote by
$\lint[X \leftarrow f]$ the interpretation $\lint'$,
such that $\lint'(X) = f$ and $\lint'(X') = \lint(X')$
for all $X' \neq X$.

\begin{defi}
Given a linear hybrid system
$\calK = (V, E, \lambda, \{P_i\}_{i \in J}, \coeff)$ and an interpretation
$\lint$, a $\Qmu$-formula yields a valuation
function $\semek{\phi} : S \to \Rinf$ defined in the following standard
way for a state $s = (v^s, y^s_1, \ldots, y^s_M)$.
\begin{iteMize}{$\bullet$}
\item $ \semek{P_i}(s) = P_i(v^s)$, 
      $ \semek{X}(s) = \lint(X)(s)$, and
      $ \semek{y_i}(s) = y^s_i$, $ \semek{\neg \phi} = - \semek{\phi}$
\item $ \semek{\phi_{1} \land \phi_{2}} =
        \min \{\semek{\phi_{1}},\semek{\phi_{2}}\}$ and
      $ \semek{\phi_{1} \lor \phi_{2}} =
        \max \{\semek{\phi_{1}},\semk{\phi_{2}}\}$,
\item $ \semek{\Diam \phi}(s) =
        \sup_{s' \in \successor(s)} \semek{\phi}(s')$ and
      $ \semek{\Box \phi}(s) =
        \inf_{s' \in \successor(s)} \semek{\phi}(s')$,
\item $\semek{\mu X.\phi} = \inf \{ f \in \Func : f = \semsub{\phi}\}$,\\
  $\semek{\nu X.\phi} = \sup \{ f \in \Func : f = \semsub{\phi}\}$.
\end{iteMize}
\end{defi}

\noindent For formulae without free variables we write $\semk{\phi}$
rather than $\semek{\phi}$.

Please note that the inclusion of variables does not fundamentally 
change the semantics of \qmu. The \qmu in \cite{FGK10} is evaluated
on quantitative transition systems. Here, a formula is evaluated on
the state graph of a linear hybrid system, rather than the system itself.
Intuitively, a linear hybrid system is a compact representation
of an infinite quantitative transition system (its state graph).
Thus, many properties of the \qmu from \cite{FGK10} remain true.
For example, to embed the classical $\mu$-calculus in \qmu one must
interpret \emph{true} as $+\infty$ and \emph{false} as $-\infty$.

\begin{exa}
The formula $\mu X.(\Diam X \lor y_1)$ evaluates to the supremum of
the values of $y_1$ on all runs from some initial state:
e.g. to $\infty$ if evaluated on the simple initialised
leaking gas burner model. To determine the longest period of time
during which the gas is leaking we use the formula
$\mu X.(\Diam X \lor (y_0 \land P))$, which evaluates to $1$ on
the initial state $(v_0, \ol{0})$ in our example.
\end{exa}

The remainder of this paper is dedicated to the proof of our following
main result which shows that $\semk{\phi}$ can be approximated
with arbitrary precision on initialised linear hybrid systems.

\begin{thm} \label{mainthm}
Given an initialised linear hybrid system $\calK$, a \qmu formula $\phi$
and an integer $n > 0$, it is decidable whether $\semk{\phi} = \infty$,
$\semk{\phi} = -\infty$, or else a number $r \in \bbQ$ can be computed
such that $|\semk{\phi} - r| < \frac 1 n$.
\end{thm}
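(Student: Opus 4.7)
The plan is to follow the four-step reduction sketched in the introduction: from $(\calK,\phi)$ to a parity-style model-checking game, then to a flat game, then to a finite discretised game, and finally to a counter parity game. I would first define $\calG$ whose positions are pairs $(s,\psi)$ with $s$ a state of $\calK$ and $\psi$ a subformula of $\phi$. \pzero controls $\lor$ and $\Diamond$ positions (choosing an outgoing transition, a label, a time $t \in I$ and the resulting successor in $\Rinf^M$), while \pone controls $\land$ and $\Box$ symmetrically; atomic positions receive payoff $P_i(v^s)$ or $y_k^s$, and the fixpoint binders determine the priority function in the standard way. A quantitative unfolding argument, analogous to the proof for the Boolean $\mu$-calculus, shows that the value of $\calG$ from $(s,\phi)$ equals $\semk{\phi}(s)$. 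This game is not yet effective because moves and payoffs range over $\Rinf$.

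Second, I would flatten $\calG$: since $\calK$ is initialised, whenever $\coeff_i$ changes along a transition the variable $y_i$ is reset, so rescaling $y_i$ by $1/\coeff_i(v)$ locally in each location $v$ is globally consistent (the cases $\coeff_i(v) = 0$ and $\coeff_i(v) < 0$ are handled separately, by collapsing and sign-flipping, respectively). Guards, reset sets and the atomic payoff $y_k$ transform by rational rescaling, and afterwards every variable evolves with rate $1$. The continuous part now looks like the clock increments of a timed automaton, but the logic still reads the exact clock values into the payoff.

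The main obstacle is the third step: discretising the state space while controlling the approximation error. The usual region-equivalence for timed automata is insufficient because two states lying in the same region may produce very different quantitative valuations of the formula $y_k$, and these differences compound through infinitely many modal and fixpoint unfoldings. I would instead introduce a quantitative equivalence that refines the region graph by recording fractional parts up to some granularity $1/N$, with $N$ chosen as an explicit function of $n$, the alternation depth of $\phi$, the number of subformulae, and the largest constant occurring in $\calK$ or $\phi$. The crux is to prove that near-optimal play can be restricted to strategies that are constant on equivalence classes up to an additive error of $\tfrac{1}{2n}$; this is where the paper's new technique of working directly with a quantitative equivalence relation, together with a normal form for optimal strategies in model-checking games, replaces the direct region-graph collapse used in the qualitative case.

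Finally, I would encode the quotient game as a counter parity game: the finitely many equivalence classes give the control states, while integer counters track the rescaled non-reset variables, with resets in $\calK$ inducing counter resets and the elapsed time in each location contributing integer or bounded-fractional increments. The cited decidability and approximability result for counter parity games then yields either a verdict $\semk{\phi} = \pm\infty$ or a rational value for the discretised game within $\tfrac{1}{2n}$; combining this with the discretisation error from the previous step produces a rational $r$ satisfying $|\semk{\phi} - r| < \tfrac{1}{n}$, as required.
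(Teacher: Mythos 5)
Your steps 1, 2, and 4 match the paper's outline (model-checking game, flattening via the initialisation property, reduction to counter parity games), but your step 3 contains a gap that the paper's whole Section~5 is designed to avoid. You propose refining the region graph by recording fractional parts at granularity $1/N$ where $N$ depends on $n$, the alternation depth, the size of $\phi$, and the constants. This cannot give the claimed $\frac{1}{2n}$ bound, because plays in the model-checking game are of unbounded length: there is no a priori bound, in terms of $n$, the depth or the size of $\phi$, on the number of discrete transitions a play makes before a variable is read for the payoff (or reset). With a fixed grid $1/N$, each move can introduce a rounding error of order $1/N$ that \emph{accumulates} in the running variable values, so the drift over a long play is unbounded and the final payoff can be off by arbitrarily much. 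This is precisely why the paper's Definition~\ref{edisc} of $\epsilon$-discrete strategies uses a \emph{geometrically decreasing} perturbation budget $\epsilon/2^{n+1}$ at step $n$, keeping the total distance $\dih$ bounded by $\epsilon$ over an entire (arbitrarily long) play; the example in Figure~\ref{fig-eps-need} shows that such a decreasing schedule is genuinely necessary, even in a one-variable game.

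The paper's equivalence $\sim$ is also deliberately \emph{not} a refinement of regions: it is essentially the ordinary region equivalence (same integer parts plus order of fractional parts). The quantitative precision is not obtained by shrinking regions but by (i) showing via Lemma~\ref{shift-move} and Lemma~\ref{discrete_succ} that one can mimic any move while staying $\epsilon$-close to integers, (ii) Proposition~\ref{values_close}, which bounds the loss of restricting to discrete strategies by the maximal multiplicative coefficient $m$ in $\iota$ — an error that is \emph{independent of play length}, and (iii) multiplying the whole game by $n\cdot m$ (Lemma~\ref{mult_games}, Corollary~\ref{mult_games_col}, Lemma~\ref{cr_approx}), which turns the constant additive error $m$ into $1/n$ without changing the multiplicative factors. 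After this, the resulting counter-reset game has only $[0,0]$ and $[1,1]$ increments — pure integer steps — rather than the ``bounded-fractional increments'' you envisage, which is what makes the reduction to counter parity games go through. Your proposal as stated would need an argument that bounds cumulative drift over unboundedly long plays, and the fixed-granularity approach does not provide one.
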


In other words, for every $\epsilon$ we can approximate $\semk{\phi}$
within $\epsilon$. We formulated the theorem above using $n$ because it
makes the representation of $\epsilon$ precise, so we can provide a complexity
bound: Given on input the system $\calK$, the formula $\phi$ and $n$, we will
show how to compute the number $r$ (or output $\pm\infty$) in \textsc{8EXPTIME}.

\section{Interval Games}\label{sec_ipg}

In this section, we define a variant of quantitative parity games
suited for model checking \Qmu on linear hybrid systems.
As mentioned above, a linear hybrid system can be seen as a compact 
representation of an infinite quantitative transition system. 
Similarly, a parity game that is played on a linear hybrid
system can be viewed as a compact, finite description of an
infinite quantitative parity game, as defined in \cite{FGK10}.

\begin{defi} \label{def-ipg}
An \emph{interval parity game} (IPG)
$\calG = (V_0, V_1, E, \labels, \coeff, \indexi, \Omega)$,
is played on a LHS $(V, E, \labels, \coeff)$ (without predicates) and
$V = V_0 \cupdot V_1$ is divided into positions of either \pzero or 1. 
The transition relation  $E \subseteq V \times V$ describes possible
moves in the game which are labelled by the function
$\labels : E \to \Pot_{\mathrm{fin}}(\calL_M)$.
The function $\indexi : V \to M \times \Rinf \times \Rinf$ assigns to each
position the index of a variable and a multiplicative and additive factor,
which are used to calculate the payoff if a play ends in this position.
The priority function $\Omega : V \to \{0, \ldots, d\}$
assigns a priority to every position.
\end{defi}

Please note that interval parity games are played on linear hybrid systems without any quantitative predicates,
\ie the set of of predicates is empty and therefore omitted.

A \emph{state} $s=(v, \ol{y}) \in V \times \Rinf^M$ of an interval game
is a position in the game graph together with a variable assignment for
all $M$ variables. 
A state $s'$ is a successor of $s$ if it is a successor
in the underlying LHS, \ie if $s' \in \successor(s)$.
We use the functions $\loc(s)=v$ and $\var(s)=\ol{y}, \var_i(s)=y_i$ to 
access the components of a state.
For a real number $r$, we denote by $r \cdot s=(v, r \cdot \var_0(s), \ldots 
r \cdot \var_M(s))$ and  $r + s=(v, r + \var_0(s), \ldots r + \var_M(s)).$
We call $S_i$ the state set $\{s=(v, \ol{y}) : v \in V_i \}$ where player $i$
has to move and $S = S_0 \cupdot S_1$.

\textbf{How to play.}
Every play starts at some position $v \in V$ with all variables set to $0$,
\ie the starting state is $s_0=(v, 0, \ldots, 0)$.
For every state $s = (v, \ol{y}) \in S_i$, player $i$
chooses an allowed successor state $s' \in \successor(s)$ and the play
proceeds from $s'$. If the play reaches a state $s$ such that
$\successor(s) = \emptyset$ it ends, otherwise the play is infinite.

Intuitively, the players choose the time period they want to spend
in a location before taking a specified transition.
Note that in this game every position could possibly be a terminal
position. This is the case if it is not possible to choose a time period
from the given intervals in such a way that the respective constraints
on all variables are fulfilled.

\textbf{Payoffs.}
The outcome $\pay(s_0 \closedots s_k)$ of a finite play ending in 
$s_k=(v, y_1, \closedots, y_M)$ where $\indexi(v) = (i, a, b)$ is
$\pay(s_k)= a \cdot y_i + b$.
To improve readability, from now on we will simply write 
$\indexi(v) = a \cdot y_i + b$ in this case.
The outcome of an infinite play depends only on the lowest priority seen
infinitely often in positions of the play. We will assign the value
$- \infty$ to every infinite play, where the lowest priority seen infinitely
often is odd, and $\infty$ to those where it is even.

\textbf{Goals.}
The two players have opposing objectives regarding the outcome of the play.
\pzero wants to maximise the outcome, while \pone wants to minimise~it.

\textbf{Strategies.} 
A strategy for player $i \in {0,1}$ is a function
$\strat : S^*S_i \to S$ with $\strat(s) \in \successor(s)$.
A play $\pi = s_0 s_1 \ldots$ is \emph{consistent with a strategy} $\strat$
for player $i$, if $s_{n+1}=\strat(s_0\ldots s_n)$ for every $n$ such that
$s_n\in S_i$.
For strategies $\sigma, \rho$ for the two players, 
we denote by $\pi(\sigma, \rho,s)$ the unique
play starting in state $s$ which is consistent with both $\sigma$ and
$\rho$.

\textbf{Determinacy.} 
A game is \emph{determined} if, for each state $s$, the highest outcome
\pzero can assure from this state and the lowest outcome \pone can assure
coincide,
\[ \adjustlimits\sup_{\sigma \in \Gamma_0} \inf_{\rho \in \Gamma_1}
      \pay(\pi(\sigma, \rho,s)) = 
  \adjustlimits\inf_{\rho \in \Gamma_1} \sup_{\sigma \in \Gamma_0}
      \pay(\pi(\sigma, \rho,s)) =: \val \calG (s),\]
where $\Gamma_0, \Gamma_1$ are the sets of all possible strategies 
for \pzero, \pone and the achieved outcome is called the 
\emph{value of $\calG$ at $s$}.

We say that the interval game is over $\setQ$ or $\setZ$ if both
the underlying LHS and all constants in $\iota(v)$
are of the respective kind. Please note that this does
not mean that the players have to choose their values from $\setQ$ or
$\setZ$, just that the endpoints of the intervals and constants
in the payoffs are in those sets.

Intuitively, in a play of an interval parity game, the players choose
successors of the current state as long as possible.
\begin{exa}
In Figure \ref{ipg}, we show a simple example of an interval parity game.
Positions of \pzero are depicted as circles and positions of \pone
as boxes. To keep things simple, there is just one clock variable, 
$y_0$, all constraints are trivially true and the reset
sets are empty, so we label the transitions only with the time intervals
that the players can choose from. 
The priorities are depicted next to the nodes for non-terminal
positions and the evaluation function above the terminal position (in general, also
positions with outgoing edges could be terminal, however in this example this is
not possible as there are no constraints on the variable).

A play of this system starting at node $v_0$ could end after two
moves in position $v_2$, if \pone decided to move there (he also has the choice to
move down). 
The payoff of this play would then
depend only on the choice that \pzero made in the first move, for example $\frac 1 3 \in [0, \frac 1 2]$. 
Then the payoff would be
$3 \cdot (\frac 1 3 + 2) - 1=6$ (as in this 
play, the second time interval only permits the choice $2$).

If \pone would move down instead
of ending the play and the play would loop infinitely often in the cycle $v_3, v_4, v_5$
at the bottom, the least priority that occurs infinitely often would 
determine the outcome of the play; in this case it would be 0 at $v_3$ and therefore
the payoff would be $\infty$.
\end{exa}
\begin{figure}[h]\label{ipg}
\begin{center}
\begin{tikzpicture}

\node [circle,draw] (a) at (0,1) {$v_0$};
\node [circle] (a1) at (1.3,1) {\small{$\Omega(v_0)=1$}};

\node [rectangle,draw,minimum size =0.6cm] (b) at (0,-0.5) {$v_1$};
\node [rectangle] (b1) at (1.3,-0.5) {\small{$\Omega(v_1)=1$}};

\node [rectangle,draw, minimum size =0.6cm] (c) at (-2,-0.5) 
  {\small{{$v_2$}}};
\node [rectangle] (c1) at (-2.5,0.1) 
  {\small{{$\iota(v_2)=3 \cdot y_0 - 1$}}};

\node [circle,draw] (d) at (0,-2) {$v_3$};
\node [circle] (d1) at (-1.3,-2) {\small{$\Omega(v_3)=0$}};

\node [circle,draw] (e) at (-2,-3) {$v_4$};
\node [circle] (e1) at (-3.3,-3) {\small{$\Omega(v_4)=2$}};

\node [circle,draw] (f) at (2,-3) {$v_5$};
\node [circle] (f1) at (3.3,-3) {\small{$\Omega(v_5)=1$}};

\path[->] (a) edge []node[right] {{\tiny{$[0,\frac 1 2]$}}} (b);
\path[->] (b) edge []node[above] {{\tiny{$[2,2]$}}} (c);

\path[->] (b) edge []node[right] {{\tiny{$[1,1]$}}} (d);
\path[->] (d) edge []node[above] {{\tiny{$[1,1]$}}} (e);
\path[->] (f) edge []node[above] {{\tiny{$[1,1]$}}} (d);
\path[->] (e) edge []node[above] {{\tiny{$[1,1]$}}} (f);
\path[->] (f.north) edge [bend right]node[right] {{\tiny{$[1,1]$}}} (b.east);
\end{tikzpicture}
\end{center}
\caption{Simple interval parity game} 
\end{figure}
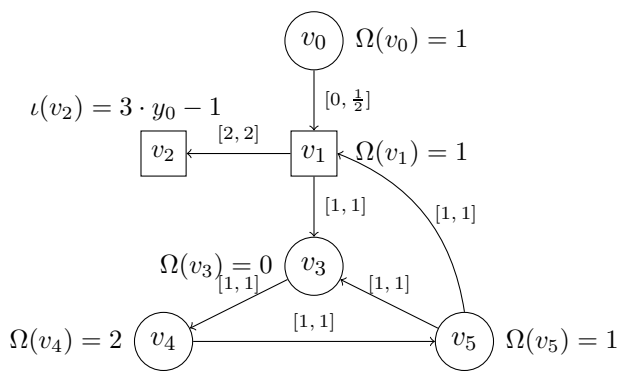

We already mentioned that an interval parity game can be seen
as a representation of a quantitative parity game, now we 
want to describe this formally.
We use the notion from \cite{FGK10} and define,
for an IPG with $M$ variables
$\calG = (V_0, V_1, E, \labels, \coeff, \indexi, \Omega)$,
the corresponding infinite quantitative parity game without discounts
$\calG^* = (V_0 \times \Rinf^M, V_1 \times \Rinf^M, E^*, \lambda^*, \Omega^*)$
with $(s, s') \in E^*$ iff $s'$ is a successor of $s$ as above,
$\Omega^*(v, \ol{z}) = \Omega(v)$ and
$\lambda^*(v, \ol{z})=\alpha \cdot z_i + \beta $ iff
$\indexi(v)=\alpha \cdot y_i + \beta$.
The notions of plays, strategies, values and determinacy 
for the IPG $\calG$ are defined exactly as the ones for
the quantitative parity game $\calG^*$ in \cite{FGK10}.
In particular, it follows from the determinacy of
quantitative parity games that also interval parity games
are determined.

\subsection{Model-Checking Games for \Qmu} \label{subsec_mc}

A game $(\calG, v)$ is a model-checking game for a formula $\phi$ and
a system $\calK, v'$, if the value of the game starting from $v$ is
exactly the value of the formula evaluated on $\calK$ at $v'$.
In the qualitative case, that means, that $\phi$ holds in $\calK, v'$ if
\pzero wins in $\calG$ from~$v$. For a linear hybrid system $\calK$
and a \Qmu-formula $\phi$, we construct an IPG
$\mc[\calK,\phi]$ which is the model-checking game for $\phi$ on $\calK$.

The full definition of $\mc[\calK,\phi]$ closely follows the construction
presented in~\cite{FGK10} and is presented below.

Intuitively, the positions are pairs consisting of a subformula
of $\phi$ and a location of $\calK$. 
Which player moves at which position depends on the outermost 
operator of the subformula. At disjunctions \pzero moves
to a position corresponding to one of the disjuncts and from 
$(\Diamond \phi, v)$ to $(\phi, w)$ where 
$(v,w) \in E^\calK$, and \pone makes analogous
moves for conjunctions and $\Box$. From fixed-point variables the play
moves back to the defining formula and the priorities
of positions depends on the alternation level of fixed points, 
assigning odd priorities to least fixed points and even priorities 
to greatest fixed points.

\begin{defi}
For a linear hybrid system $\calK = (V, E, \{P_i\}_{i \in J}, \labels, \coeff)$
and a \Qmu-formula $\phi$ in negation normal form, the interval game 
\[ \mc[\calK,\phi] = (V_0, V_1, E, \labels, \coeff, \indexi, \Omega), \]
which we call the \emph{model-checking game} for $\calK$ and $\phi$,
is constructed in the following way, similar to the standard construction
of model-checking games for the $\mu$-calculus (c.f. \cite{FGK10}).
\end{defi}

\textbf{Positions.}
The positions of the game are pairs $(\psi, v)$,
where $\psi$ is a subformula of $\phi$, and
$v \in V$ is a location in the LHS $\calK$.
Positions $(\psi,v)$ where the top operator
of $\psi$ is $\Box, \land$, or $\nu$ belong to \pone and
all other positions belong to \pzero.
A state in the game is denoted by $s = (p, \ol{y})$, where $p=(\psi, v)$
is the position and $\ol{y}$ is the variable assignment of
the location $v$ in the underlying linear hybrid system $\calK$.

\textbf{Moves.} 
Positions of the form $(P_i, v)$ and $(y_i, v)$
are terminal positions.
From positions of the form $(\psi \land \theta, v)$,
resp. $(\psi \lor \theta, v)$, one can move to 
$(\psi, v)$ or to  $(\theta, v)$.
Positions of the form $(\Diam \psi, v)$ have
either a single successor $(-\infty)$ in case $v$ is a terminal 
location in $\calK$, or one successor $(\psi, v')$ for every
$v' \in vE$. Analogously, positions of the form $(\Box \psi, v)$
have a single successor $(\infty)$ if $vE = \emptyset$,
or one successor $(\psi, v')$ for every $v' \in vE$ otherwise.
The moves corresponding to system moves $(v,v')$ are labelled
accordingly with $\labels((v,v'))$, all other 
moves are labelled with the empty label $([0,0],(-\infty, \infty)^M, \emptyset)$
which indicates that no time passes, there are no constraints on the variables
and no variable is reset.
Fixed-point positions $(\mu X. \psi, v)$, resp. $(\nu X. \psi, v)$
have a single successor $(\psi, v)$.
Whenever one encounters a position where the fixed-point variable stands alone,
\ie $(X, v')$, the play goes back to the corresponding definition,
to $(\psi, v')$.

\textbf{Payoffs.} 
The function $\indexi$ assigns $\sem{P_i}(v)$
to all positions $(P_i, v)$, $\pm \infty$ to all
positions $(\pm \infty)$
and $y_i$ to positions $(y_i, v)$.
To discourage the players from ending the game at any other position
than a terminal one, $\indexi$ assigns all other positions
outcome $-\infty$ for \pzero's positions or $\infty$
for \pone's positions.
The payoff $\pay(\pi)$ of a play $\pi$ is calculated
using $\indexi$ and the priorities as stated before.

\textbf{Priorities.} 
The priority function $\Omega$ is defined as in the classical case
using the alternation level of the fixed-point variables,
see \eg \cite{Graedel03}.
Positions $(X, v)$ get a lower priority
than positions $(X', v')$ if $X$ has a lower alternation level than $X'$.
The priorities are then adjusted to have the right parity, such that
an even value is assigned to all positions $(X, v)$ where $X$ is a 
$\nu$-variable and an odd value to those where $X$ is a $\mu$-variable.
The maximum priority, equal to the alternation depth of the formula,
is assigned to all other positions.

\begin{exa}
We continue our example of the leaking gas burner and present in
Figure \ref{fig-mcgame-example} the model-checking
game for the previously introduced system and formula.
In this interval parity game, ellipses depict positions of \pzero and rectangles
those of \pone. In this game, all priorities are odd (and therefore
omitted), \ie infinite plays are bad for \pzero.
There is only one position with a constraints on variable $y_0$ and
in only two positions a choice about the time that passes can be made.
Both of these positions belong to \pzero in this example and are labelled
with the corresponding intervals below (and in both $y_0$ is also reset).
In terminal nodes, either the variable $y_0$ or the predicate
$P$ is evaluated for the payoff (this choice can be made by \pone in this example).
The value of the game is $1$, as is the value of the formula on the
system starting from either node, and an optimal strategy for
\pzero is picking $1$ from $[0,1]$ and then leaving the cycle
where \pone is forced to choose between the evaluation of
$y_0$ or $P$ at $v_1$. Since he is minimising, he will choose
to evaluate $y_0$.

\begin{figure}
\begin{center}
\begin{tikzpicture}[scale=0.75] \small
\node [ellipse,draw,inner sep=2pt] (a) at (0,2.5)
  {$\scriptstyle{\mu X.(\Diamond X \lor (y_0 \land P)), v_0}$};
\node [ellipse,draw] (b) at (0,1) 
  {$\scriptstyle{\Diamond X \lor ((y_0 \land P), v_0}$};
\node [draw] (e) at (-4,1) {$\scriptstyle{y_0 \land P, v_0}$};
\node [ellipse,draw] (f) at (-4, 2.5) {{$\scriptstyle{y_0}$}};
\node [ellipse,draw] (g) at (-4,-0.5) {{$\scriptstyle{-\infty}$}};
\node [ellipse, draw] (x) at (0,-0.5) {$\scriptstyle{\Diamond X, v_0}$};
\node [ellipse,draw] (d) at (0,-2) {$\scriptstyle{X, v_1}$};
\path[->] (a) edge []node {} (b);
\path[->] (b) edge []node {} (x);
\path[->] (x) edge []node[right] {{$\scriptstyle{R=\{y_0\}, [0,1]}$}} (d);
\path[->] (b) edge []node {} (e);
\path[->] (e) edge []node {} (f);
\path[->] (e) edge []node {} (g);
\node [ellipse,draw,inner sep=2pt] (a2) at (5,-2)
  {$\scriptstyle{\mu X.(\Diamond X \lor (y_0 \land P)), v_1}$};
\node [ellipse,draw] (b2) at (5,-0.5)
  {$\scriptstyle{\Diamond X \lor ((y_0 \land P), v_1}$};
\node [ellipse, draw] (x2) at (5,1) {$\scriptstyle{\Diamond X, v_1}$};
\node [ellipse,draw] (d2) at (5,2.5) {$\scriptstyle{X, v_0}$};
\node [draw] (e2) at (9,-0.5) {$\scriptstyle{y_0 \land P, v_1}$};
\node [ellipse,draw] (f2) at (9, 1) {{$\scriptstyle{y_0}$}};
\node [ellipse,draw] (g2) at (9,-2) {{$\scriptstyle{\infty}$}};
\path[->] (a2) edge []node {} (b2);
\path[->] (b2) edge []node {} (x2);
\path[->] (x2) edge []node[right] {{$\scriptstyle{R=\{y_0\}, y_0 \in [30,40],[0,\infty)}$}} (d2);
\path[->] (b2) edge []node {} (e2);
\path[->] (e2) edge []node {} (f2);
\path[->] (e2) edge []node {} (g2);
\path[->] (d) edge []node {} (a2);
\path[->] (d2) edge []node {} (a);
\end{tikzpicture}
\end{center}
\caption{Model-checking game for $\mu X. (\Diam X \lor (y_0 \land P))$ on initialised leaking gas burner.}
\label{fig-mcgame-example}
\end{figure}
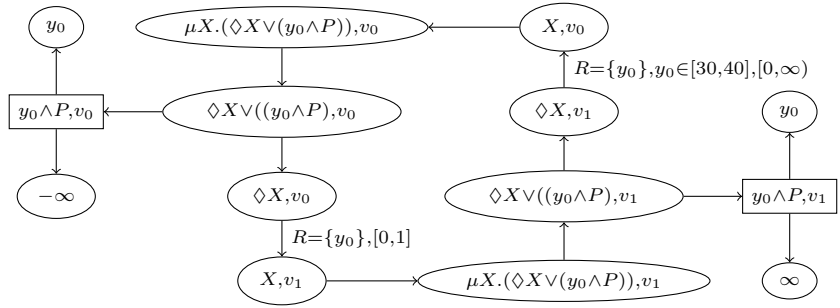

\end{exa}

It has been shown in \cite{FGK10} that quantitative parity games of any size
are determined and that they are model-checking games for \Qmu. These results
translate to interval parity games and we can conclude the following.

\begin{thm} \label{mccorrect}
Every interval parity game is determined and
for every formula $\phi$ in \Qmu, linear hybrid system $\calK$,
and a location $v$ of $\calK$, it holds that
\[ \val \mc[\calK,\phi]((\phi, v), \ol{0}) = {\sem \phi}^\calK(v, \ol{0}). \]
\end{thm}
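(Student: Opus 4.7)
The plan is to reduce both claims---determinacy and model-checking correctness---to the corresponding results for quantitative parity games established in \cite{FGK10}, via the bridge $\calG \mapsto \calG^*$ defined just before the theorem. The key observation is that $\calG^*$ is an (infinite) quantitative parity game in the precise sense of \cite{FGK10}, and its positions, moves, priorities and payoff labels are defined so that plays, strategies, consistency, and $\val$ in $\calG$ coincide with those in $\calG^*$ after the natural identification of a state $s = (v,\ol{y})$ with the position $(v,\ol{y})$ of $\calG^*$. Thus it suffices to transfer the two results through this identification.

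For determinacy, I would simply apply the determinacy theorem for quantitative parity games from \cite{FGK10} to $\calG^*$. Since the supremum/infimum of payoffs over strategies of $\calG$ starting at $s$ is, by construction of $\calG^*$, exactly the supremum/infimum over strategies of $\calG^*$ starting at the corresponding position, the equality $\sup_\sigma \inf_\rho = \inf_\rho \sup_\sigma$ lifts verbatim to $\calG$, yielding a well-defined $\val \calG(s)$ for every state $s$.

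For model-checking correctness I would proceed by structural induction on $\phi$ in negation normal form, mirroring the classical argument. Here the real work has already been done in two places: first, the semantics $\sem{\phi}^{\calK}$ was defined pointwise on the state graph $S = V \times \Rinf^M$, which is exactly the (infinite) quantitative transition system underlying $\calG^*$; second, the construction of $\mc[\calK,\phi]$ was tailored so that $\mc[\calK,\phi]^*$ coincides, position by position and move by move, with the model-checking game for $\phi$ on the state graph of $\calK$ in the sense of \cite{FGK10}. One therefore verifies case-by-case that each clause of the construction (the atoms $P_i, y_i$; the Boolean operators; $\Box$ and $\Diam$; fixed points and their priority assignment based on alternation level) matches the respective clause in the semantics of $\sem{\cdot}^{\calK}$ and in the $\Qmu$ model-checking game construction of \cite{FGK10}. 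Applying the correctness theorem of \cite{FGK10} to $\mc[\calK,\phi]^*$ then gives
\[
\val \mc[\calK,\phi]^*\bigl((\phi,v),\ol{0}\bigr) \;=\; \sem{\phi}^{\calK}(v,\ol{0}),
\]
and the translation $\val \calG = \val \calG^*$ established above finishes the proof.

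The main obstacle, and the only genuinely new bookkeeping, is handling the labels on moves: unlike in \cite{FGK10}, transitions in an IPG carry triples $(I,\ol{C},R)$ which restrict and reset variables and let time pass. I would check that the successor relation $\successor$ on states used to define $\calG^*$ already absorbs all of this information into the edge relation $E^*$, so that $\calG^*$ is a plain quantitative parity game with no residual dynamics, making the reduction to \cite{FGK10} sound. Once this is verified the two claims follow essentially for free.
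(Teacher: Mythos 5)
Your proposal matches the paper's own proof almost exactly: determinacy is lifted from $\calG^*$ via the determinacy theorem of \cite{FGK10}, and model-checking correctness is obtained by identifying $\mc[\calK,\phi]^*$ with the \cite{FGK10} model-checking game over the state graph of $\calK$ and invoking their correctness theorem. The only small bookkeeping step the paper spells out that you merely gesture at is the explicit rewriting of $\phi$ into a variable-free formula $\phi^*$ over a quantitative transition system $\calK^*$ with fresh predicates $P_{y_i}(v,\ol{a})=a_i$, which is what makes the syntactic match with the variable-free \Qmu of \cite{FGK10} precise.
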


\begin{proof}
Determinacy of an interval parity game $\calG$ follows directly from
the determinacy of the infinite QPG $\calG^*$ used to define $\calG$.

Let $\phi$ be a \Qmu-formula and $\calK$ a linear hybrid system.
Let $S(\calK) = (S, E^S)$ be the state graph of $\calK$, where $S$
is the set of all states, and $(s,s')\in E^S$ iff $s'\in \successor(s)$
in $\calK$. Let $\calK^* = (S, E^S, P_{y_0} \ldots P_{y_M})$ be the quantitative
transition system with predicates $P_{y_i}$ where $P_{y_i}(v, \ol{a}) = a_i$.
Let us also rewrite the formula $\phi$ into a formula without 
variables, $\phi^*$, by replacing each occurrence of $y_i$ by the 
corresponding $P_{y_i}$.

Applying the model-checking Theorem 12 from \cite{FGK10} we conclude
that for all $v \in \calK^*$ it holds
$\val \mc[\calK,\phi]^*(\phi, v) = {\sem \phi^*}^{\calK^*}(v)$,
\ie that $\mc[\calK, \phi]^*$ is the model-checking game for
$\calK^*$ and $\phi^*$. Finally, by definition of IPGs on the one hand and
the semantics of $\Qmu$ on the other, it follows that for all $\ol{x}$
\[ 
    \val \mc[\calK,\phi]((\phi, v), \ol{x}) = {\sem \phi}^\calK(v, \ol{x}). \]
~\\ \vskip -4.2em
\end{proof}

\section{Basic Properties of Interval Games}\label{sec_basic}
In this section, we first give a brief example that illustrates the difference between
interval games and timed games. Then, we show how to transform an initialised interval game over 
$\Qinf$ into an easier game over $\Zinf$ in which the all evolution rates are one.

At first sight, interval games seem to be very similar to
timed games. Simple timed games are solved by playing
on the region graph and can thus be discretised.
To stress that quantitative payoffs indeed make a difference,
we present in Figure \ref{fig-game-half-example} an initialised interval parity 
game with the interesting property that it is not optimal to play integer
values, even though the underlying system is over $\Zinf$. This simple game
contains only one variable (a clock) and has no constraints on this variable
in any of the transitions, so only the time intervals are shown.
Also, as infinite plays are not possible, the priorities are omitted,
as well as the indices of non-terminal positions (they are chosen to be
unfavourable for the current player such that she has to continue playing). 
The payoff rule specifies the outcome of a play $\pi$
ending in $v_2$ as $p(\pi)=y_0 -1$ and in $v_3$ as $p(\pi)=-y_0$.
This game illustrates that it may not be optimal to 
play integer values since choosing time $\frac{1}{2}$ in the first move 
is optimal for \pzero. This move guarantees an outcome of $-\frac{1}{2}$
which is equal to the value of the game.

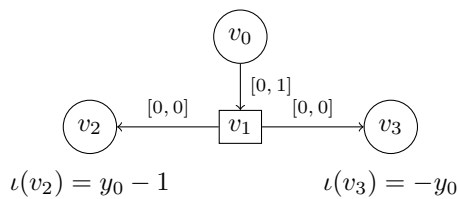
\begin{figure}[h]
\begin{center}
\begin{tikzpicture}[scale=1]
\small
\node[draw,circle] (v0) at (0, 2.2) {$v_0$};
\node[draw] (v1) at (0, 1) {$v_1$};
\node[draw,circle](e1) at (-2, 1) {$v_2$};
\node[]() at (-2, 0.25) {\small{$\iota(v_2) = y_0 -1$}};
\node[draw,circle](e2) at (2, 1) {$v_3$};
\node[]() at (2, 0.25) {\small{$\iota(v_3) = -y_0$}};
\path[->] (v0)  edge node[right] {\tiny{$[0,1]$}} (v1);
\path[->] (v1)  edge node[above] {\tiny{$[0,0]$}} (e1);
\path[->] (v1)  edge node[above] {\tiny{$[0,0]$}} (e2);
\end{tikzpicture}
\end{center}
\caption{Game with integer coefficients and non-integer value.}
\label{fig-game-half-example}
\end{figure}

\subsection{Flattening Initialised Interval Games} \label{subsec_flat}
So far, we have considered games where the values of variables can change at
different rates during the time spent in locations. In this section,
we show that for initialised games it is sufficient to look at easier
games where all rates are one, similar to timed games but with more
complex payoff rules. We call these games flat and show that for every 
initialised IPG we can construct a flat IPG with the same value.
To do so, we have to consider the regions where the coefficients do not
change and rescale the constraints and payoffs accordingly.

For an interval $I = [i_1, i_2]$, we denote by $q \cdot I$
and $q + I$ the intervals $[q \cdot i_1, q \cdot i_2]$ and $[q+i_1, q+i_2]$
respectively, and do analogously for open intervals. 

\begin{defi}
An interval parity game 
$\calG = (V_0, V_1, E, \labels, \coeff,\indexi, \Omega)$
is \emph{flat} if and only if $\coeff_i(v)=1$
for all $v \in V$ and $i=1 \ldots M$.
\end{defi}

\begin{lem} \label{flat}
For each initialised interval parity game $\calG$ there exists
a flat game $\calG'$ with the same value.
\end{lem}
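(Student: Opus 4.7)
The plan is to exploit the initialised condition to locally rescale each variable so that its rate becomes $1$, compensating by rescaling the corresponding constraints and payoffs. Write $a_i^v := \coeff_i(v)$; the initialised property is precisely the statement that $a_i^v = a_i^w$ whenever $(v,w) \in E$ carries a label that does not reset $y_i$. Hence along any segment of a play between consecutive resets of $y_i$ the rate is constant, so the rescaling $y_i \mapsto y_i / a_i^v$ is unambiguous. (The degenerate case $a_i^v = 0$ will be disposed of at the end.)

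I would define $\calG'$ on the same game graph, same player partition, same priorities, and same index of the variable in each $\indexi'(v)$ as $\calG$, but with all rates set to $1$. For each edge $(v,w) \in E$ with label $(I, \ol C, R)$, I would take the rescaled label $(I, \ol{C'}, R)$ in $\calG'$, where $C_i' = C_i / a_i^v$ for $i \notin R$ and $C_i' = C_i$ for $i \in R$ (the latter is irrelevant, since $y_i$ is reset to $0$ before the next constraint check). For each position $v$ with $\indexi(v) = a \cdot y_i + b$, I would set $\indexi'(v) = (a \cdot a_i^v) \cdot y_i + b$. Now introduce the map $\Phi(v, y_1, \ldots, y_M) = (v, y_1/a_1^v, \ldots, y_M/a_M^v)$ on states and verify that it conjugates the two successor relations and preserves payoffs. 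The three routine checks are: (i) a constraint $y_i \in C_i$ holds at $(v, \ol y)$ iff $y_i / a_i^v \in C_i / a_i^v$ holds at $\Phi(v, \ol y)$; (ii) if $(v,\ol y) \to (w, \ol{y'})$ is produced in $\calG$ by letting time $t \in I$ elapse, then the same $t$ produces the image transition in $\calG'$, because for $i \in R$ both new values are $0$, and for $i \notin R$ initialisation gives $a_i^w = a_i^v$ and so $(y_i + a_i^v t)/a_i^w = y_i/a_i^v + t$; (iii) at a terminal position, $(a \cdot a_i^v)(y_i/a_i^v) + b = a y_i + b$. Since the priorities and the player partition are untouched, infinite plays keep their outcomes as well. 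Consequently $\Phi$ induces a bijection between strategy profiles that preserves the payoff of every play, so $\val \calG(s) = \val \calG'(\Phi(s))$ for every state $s$. In particular $\Phi$ fixes the initial all-zero state, so the two games have the same value.

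The main obstacle is handling a location $v$ with $a_i^v = 0$, where the rescaling is undefined. I would dispose of this by a preprocessing step: partition the locations into maximal components connected by non-reset edges for $y_i$ (each such component has uniform rate, by initialisation). In any component of rate $0$, the variable $y_i$ is constantly equal to its reset value $0$, so every occurrence of $y_i$ in a constraint or in $\indexi$ at a location of that component can be replaced by the constant $0$ without changing the semantics. After this rewriting, all remaining rates are nonzero and the construction above produces the desired flat game $\calG'$ with $\val \calG' = \val \calG$.
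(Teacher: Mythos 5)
Your construction of $\calG'$ (same graph, rates set to $1$, constraints divided by $a_i^v$, multiplicative factor in $\indexi$ scaled by $a_i^v$) and the observation that initialisation is exactly what makes the per-location rescaling coherent across non-reset edges are identical to the paper's; packaging the argument as a state-space conjugation $\Phi$ rather than the paper's step-by-step inductive transfer of strategies is a presentational difference, not a different route. You additionally handle the degenerate case $a_i^v = 0$ (where the paper's formula $C_i/a_i$ is undefined), correctly noting that initialisation forces any rate-$0$ component to be entered only by reset or at start, so $y_i \equiv 0$ there and can be eliminated; this makes your version slightly more complete than the paper's.
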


\begin{proof}
Let $\calG = (V_0, V_1, E, \labels, \coeff,\indexi, \Omega)$
be an initialised interval parity game.
We construct a corresponding flat game 
$\calG'=(V_0, V_1, E, \labels', \coeff',\indexi', \Omega)$ in the following way:
For a position $v \in V = V_0 \cupdot V_1$ and each variable $y_i$, such that 
$\coeff_i(v)=a_i$, $\indexi(v)=a \cdot y_i + b$ and an outgoing edge $(v,w)$ with $C_i=[c_0,c_1]$
we have in the corresponding flat game:
\begin{iteMize}{$\bullet$}
\item $\coeff_i'(v)=1$
\item $C_i' \in \labels'(v,w) = [ \frac {c_0} {a_i} , \frac {c_1} {a_i}] = \frac 1 {a_i} C_i$
\item $\indexi'(v)=a_i \cdot a \cdot y_i + b$
\end{iteMize}
Note that we only change the functions $\coeff, \labels$ and $\indexi$.
We will show that for every play $\pi$ from a starting state $s$ consistent
with $\sigma$ and $\rho$, we can construct strategies $\sigma'$, $\rho'$,
such that $\pi'(\sigma', \rho',s')$ visits the same locations as $\pi$ and
$\pay(\pi)=\pay(\pi')$. Before we proceed with the proof, notice that it is
essential that $\calG$ is an initialised game. Intuitively, the value of $y_i$
in $\calG'$ is the value of $y_i$ in $\calG$ divided by the coefficient $a_i$
of the current position. When the position changes, it is thus crucial that
$a_i$ does \emph{not} change, except if $y_i$ is reset -- exactly what is
required from an initialised game.

The proof proceeds by induction on the length of the plays.
First, if $s_0=(v_0, \ol 0)$ is a state belonging to \pzero and
$\sigma(s_0)=s_1=(v_1, \ol x)$ and $s_0'=(v_0, \ol 0)$, then in $\calG'$
we define $\sigma'(s_0')=s_1'$, where $s_1'=(v_1,\ol y')$, such that
$y_i'=\frac {y_i} {a_i}$ for any $y_i \nin R \in \labels(v_0,v_1)$.
Since $(s_0,s_1)$ is allowed in $\calG$, this means that for all
$y_i \nin R \in \labels(v_0,v_1)$, we have
$y_i \in C_i=[c_0,c_1] \in \labels(v_0,v_1)$. It follows that
$\frac {c_0} {a_i} \leq y_i' = \frac {y_i} {a_i} \leq \frac {c_1} {a_i}$
for all $y_i \nin R$ and therefore $(s_0',s_1')$ is allowed in $\calG'$.
Also $\pay(s_1)= \indexi (v_1) = a \cdot y_i + b$ and therefore the payoff is
equal to $\pay(s_1')= \indexi'(v_1') = a_i \cdot a \cdot \frac {y_i} {a_i} + b$.

Let $s_0 \ldots s_k$ and $s_0' \ldots s_k'$ be finite histories in $\calG$ and
$\calG'$, such that they visit the same locations and $\pay(\pi)=\pay(\pi')$.
Then, if $s_k=(v_k, \ol y)$ is a state belonging to \pzero and
$\sigma(s_k)=s_{k+1}=(v_{k+1}, \ol y)$ and $s_k'=(v_k, \ol z)$,
then in $\calG'$ we define $\sigma'(s_k')=s_{k+1}'$, where
$s_{k+1}'=(v_k,\ol w)$, such that $w_i=t$ where $t_i= \frac {y_i} {a_i}$
for any $y_i \nin R \in \labels(v_k,v_{k+1})$.
Since $(s_k,s_{k+1})$ is allowed in $\calG$, this means that for
all $y_i \nin R$, $y_i \in C_i=[c_0,c_1] \in \labels(v_k,v_{k+1})$.
As $\frac {c_0} {a_i} \leq w_i= \frac {y_i} {a_i} \leq \frac {c_1} {a_i}$ for
all $y_i \nin R$, we get that $(s_k',s_{k+1}')$ is allowed in $\calG'$.
Also $\pay(s_k)=\indexi(v_k)=a \cdot y_i + b$ and therefore the payoff is equal
to $\pay(s_{k+1}')=\indexi'(v_{k+1}')=a_i \cdot a \cdot w_i + b =
    a_i \cdot a \cdot \frac {y_i} {a_i} + b$.

The cases for \pone are analogous. Note that, for infinite plays,
we also have the same payoff, since for the payoff of infinite games only
the locations (and their priorities) matter. Since we can construct, for
each pair of strategies in $\calG$, the corresponding strategies in $\calG'$,
and those yield a play with the same payoff, the values of the two games
are equal.
\end{proof}

Consequently, from now on we only consider flat interval parity games and
therefore omit the coefficients, as they are all equal to one.

\subsection{Multiplying Interval Games}\label{subsec_mult}

\begin{defi}
For a flat IPG $\calG = (V_0, V_1, E, \labels, \indexi, \Omega)$ and
a value $q \in \setQ$, we denote by
$q \cdot \calG = (V, E, \labels', \indexi', \Omega)$ the IPG where 
$\indexi'(v) = a \cdot y_i + q \cdot b$ iff $\indexi(v) = a \cdot y_i + b$ 
for all $v \in V$, and  $(I', \ol{C'}, R) \in \lambda'((v,w))$ iff
$(I, \ol{C}, R) \in \lambda((v,w))$ with $I'= q \cdot I$ and
${C'}_i= q \cdot C_i$ for all $(v,w) \in E$.
\end{defi}

Intuitively, this means that all endpoints in the time intervals (open and closed),
and the constraints, and all additive values in the payoff function $\indexi$
are multiplied by $q$. The values of $q \cdot \calG$ are also equal to
the values of $\calG$ multiplied by $q$.

\begin{lem} \label{mult_games}
For every IPG $\calG$ over $\Qinf$ and $q \in \setQ, q \neq 0$ it holds in all states
$s$ that $q \cdot \val \calG (s) = \val \  q \cdot \calG(q \cdot s)$.
\end{lem}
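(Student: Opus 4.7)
The plan is to construct a payoff-preserving bijection between plays of $\calG$ (starting at $s$) and plays of $q \cdot \calG$ (starting at $q \cdot s$), which scales payoffs exactly by $q$; by symmetry between the two players, this pairing pushes forward through the $\sup/\inf$ in the definition of $\val$ and yields the claim. The underlying state bijection is simply $s = (v, \ol{y}) \mapsto q \cdot s = (v, q \ol{y})$, which is a bijection on $V \times \Rinf^M$ for $q \neq 0$ (taking the standing assumption, implicit in multiplying $\intervalsplus$-labels by $q$, that $q > 0$; the case $q < 0$ would additionally flip the roles of the players and of $\pm\infty$).

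First I would verify that the state bijection is a bisimulation with respect to $\successor$, i.e.\ $s' \in \successor(s)$ in $\calG$ iff $q \cdot s' \in \successor(q \cdot s)$ in $q \cdot \calG$. Since $\calG$ is flat, a transition from $(v,\ol{y})$ to $(v',\ol{y}')$ via label $(I, \ol{C}, R)$ is witnessed by some $t \in I$ with $y'_i = y_i + t$ for $i \notin R$ (and $y'_i=0$ for $i \in R$) and $y_i \in C_i$ for all $i$. Picking $t' := q \cdot t \in q \cdot I$ in $q \cdot \calG$ produces $q y_i + t' = q(y_i + t) = q y'_i$ on non-reset coordinates and $0$ on reset ones; and $y_i \in C_i$ iff $q y_i \in q C_i = C'_i$ (using $q > 0$ so that scaling preserves interval orientation). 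Conversely, any transition in $q \cdot \calG$ from $q \cdot s$ comes from some $t'$, and $t := t'/q$ witnesses the corresponding move in $\calG$. This immediately gives a bijection $\strat \leftrightarrow \strat'$ between strategies of each player, where $\strat'(q\cdot s_0 \ldots q \cdot s_k) = q \cdot \strat(s_0 \ldots s_k)$.

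Next I would check payoff preservation. For a finite play ending in state $s_k=(v, \ol{y})$ with $\indexi(v) = a \cdot y_i + b$, the paired play in $q \cdot \calG$ ends in $(v, q\ol{y})$ with $\indexi'(v) = a \cdot y_i + q b$, so its payoff is $a \cdot (q y_i) + q b = q(a y_i + b) = q \cdot \pay(s_0 \ldots s_k)$, as desired. For infinite plays, since $\calG$ and $q \cdot \calG$ share the same position set, edge set and priority function $\Omega$, and the bijection preserves which positions are visited (only the numerical valuations $\ol{y}$ are rescaled), the lowest priority seen infinitely often agrees, so the payoff is $\pm\infty$ in both; and $q \cdot (\pm\infty) = \pm\infty$ for $q > 0$.

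Combining these, for every pair of strategies $(\sigma, \rho)$ in $\calG$ the paired plays satisfy $\pay(\pi'(\sigma',\rho', q\cdot s)) = q \cdot \pay(\pi(\sigma, \rho, s))$, so
\[
  \val (q \cdot \calG)(q \cdot s)
  \;=\; \adjustlimits\sup_{\sigma'} \inf_{\rho'} \pay(\pi')
  \;=\; \adjustlimits\sup_{\sigma} \inf_{\rho} \bigl( q \cdot \pay(\pi) \bigr)
  \;=\; q \cdot \val \calG(s),
\]
where the middle equality uses the strategy bijection and the last uses that multiplication by a positive $q$ commutes with $\sup$ and $\inf$ (and both sides are $\pm\infty$ simultaneously on infinite plays). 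The only mildly delicate point is the last identity on the extended reals, and checking that rescaling by $q$ truly respects the labels $(I,\ol{C},R)$ coordinate-wise, so I would be most careful in the bisimulation step; the rest is routine once the state bijection is fixed.
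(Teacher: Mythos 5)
Your proof is correct and takes the same approach as the paper: a state bijection $s \mapsto q\cdot s$ inducing a strategy bijection, payoff preservation along paired plays, and pushing the scaling through $\sup/\inf$. You are in fact more careful than the paper's terse argument — you explicitly verify the $\successor$-bisimulation, treat infinite plays, and correctly flag that the statement's $q\neq 0$ really requires $q>0$ (the paper's definition of $q\cdot I$ and the unchanged priorities both break for $q<0$).
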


\begin{proof}
We denote by $q \cdot \sigma$ the strategy with 
$q \cdot \sigma(q \cdot h) = q \cdot s'$ iff $\sigma(h)= s'.$
The mapping of $\calG$ with strategies for both players $\sigma$ and $\rho$
to $q \cdot \calG$ with $q \cdot \sigma$ and $q \cdot \rho$ is a bijection 
(in the reverse direction take $\frac 1 q$). We also have 
$q \cdot \pay_{\calG} (\pi(\sigma, \rho, s)=s_0s_1 \ldots s_k) = q \cdot (a \cdot y_i + b)$
where $\indexi(loc(s_k))=(a,i,b)$ which is equal to 
$\pay_{q \cdot \calG} (\pi(q \cdot \sigma, q \cdot \rho, q \cdot s)= q \cdot s_0 \ldots q \cdot s_k)= a \cdot (q \cdot y_i) + q \cdot b$
for all finite plays $\pi$. Therefore, we know that 
$\inf_{\rho} q \cdot \pay (\pi(\sigma, \rho, s) = 
 \inf_{q \cdot \rho}\pay (\pi(q \cdot \sigma, q \cdot \rho, q \cdot s)$ 
and the same holds for the supremum and thus we get the desired result.
\end{proof}

Note that all multiplicative factors in $\iota$ are the same in $\calG$ and
in $q \cdot \calG$. Moreover, if we multiply all constants in $\iota$ in a game
$\calG$ (both the multiplicative and the additive ones) by a positive value $r$,
then the value of $\calG$ will be multiplied by $r$, by an analogous argument as
above. Thus, if we first take $r$ as the least common multiple of all denominators
of multiplicative factors in $\iota$ and multiply all $\iota$ constants as above,
and then take $q$ as the least common multiple of all denominators of endpoints
in  the intervals and additive factors in the resulting game $\calG$ and build
$q \cdot \calG$, we can conclude the following.

\begin{cor} \label{mult_games_col}
For every finite IPG $\calG$ over $\Qinf$, there exists an IPG $\calG'$ over $\Zinf$ and
$q, r \in \bbZ$ such that $\val\calG(s) = \frac{\val\calG'(q \cdot s)}{q \cdot r}$.
\end{cor}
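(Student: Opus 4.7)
The plan is to apply two scaling operations in sequence: first one that clears denominators in the multiplicative factors of $\indexi$, then the construction $q \cdot \calG$ from Lemma \ref{mult_games} to clear denominators in the intervals, constraints, and additive factors. The hint immediately preceding the corollary sketches exactly this, so the work is essentially to formalise it.

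First I would establish an auxiliary scaling: for any rational $r > 0$, let $\calG_r$ be obtained from $\calG$ by replacing $\indexi(v) = a \cdot y_i + b$ with $(r \cdot a) \cdot y_i + (r \cdot b)$ at every $v$, and leaving $E$, $\labels$, and $\Omega$ unchanged. The proof that $\val\calG_r(s) = r \cdot \val\calG(s)$ is a verbatim copy of Lemma \ref{mult_games}: strategies transfer via the identity map, a finite play ending at a location with index $(a,i,b)$ has payoff $r \cdot a \cdot y_i + r \cdot b = r \cdot (a \cdot y_i + b)$, and infinite-play payoffs $\pm\infty$ are preserved since $r > 0$ and priorities are untouched.

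Next, pick $r \in \setZ_{>0}$ as the least common multiple of the denominators of the multiplicative factors $a$ occurring in $\indexi$, and form $\calG_1 := \calG_r$. Every multiplicative factor in $\indexi_{\calG_1}$ now lies in $\setZ$, while intervals, constraints, and additive factors are still over $\setQ$. Then pick $q \in \setZ_{>0}$ as the least common multiple of the denominators of all interval and constraint endpoints and all additive factors appearing in $\calG_1$, and set $\calG' := q \cdot \calG_1$ via the construction preceding Lemma \ref{mult_games}. That construction multiplies intervals, constraints, and additive factors by $q$ while leaving multiplicative factors untouched, so every constant in $\calG'$ lies in $\setZ$ and $\calG'$ is an IPG over $\Zinf$. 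Combining Lemma \ref{mult_games} with the auxiliary scaling yields
\[
\val\calG'(q \cdot s) \;=\; q \cdot \val\calG_1(s) \;=\; q \cdot r \cdot \val\calG(s),
\]
and dividing gives the desired identity $\val\calG(s) = \val\calG'(q \cdot s)/(q \cdot r)$.

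The one point to watch is the order of operations: scaling by $q$ via Lemma \ref{mult_games} alone would not integralise the multiplicative factors $a$, so the auxiliary step is genuinely needed; conversely, applying $q \cdot \calG$ after clearing the $a$'s does not reintroduce rational multiplicative factors, since that construction only touches $b$ and the interval endpoints. Hence the two rescalings compose cleanly, and beyond this everything is bookkeeping.
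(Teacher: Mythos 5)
Your proof is correct and follows essentially the same two-stage rescaling as the paper's own remark preceding the corollary: first multiply both factors of $\indexi$ by the lcm $r$ of the denominators of the multiplicative coefficients, then apply the $q \cdot \calG$ construction of Lemma~\ref{mult_games} with $q$ the lcm of the remaining denominators, and compose $\val\calG'(q \cdot s) = q \cdot \val\calG_1(s) = q \cdot r \cdot \val\calG(s)$. Your explicit note on why the order of operations matters and why the second rescaling cannot reintroduce rational multiplicative factors is exactly the observation the paper relies on implicitly.
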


From now on we assume that every IPG we investigate is a flat game over $\Zinf$
when not explicitly stated otherwise.

\section{Discrete Strategies} \label{sec_ds}

Our goal in this section is to show that it suffices to use a simple kind
of (almost) discrete strategies to approximate the value of flat 
interval parity games over $\Zinf$.
To this end, we define an equivalence relation between states
whose variables belong to the same $\bbZ$ intervals. This equivalence,
resembling the standard methods used to build the region graph from
timed automata, is a technical tool needed to compare the values of
the game in similar states. 

We use the standard meaning
of $\floor r$ and $\ceil r$, and denote by $\{r\}$ the number $r - \floor{r}$
and by $[r]$ the pair $(\floor r , \ceil r)$. Hence, when writing
$[r]=[s]$, we mean that $r$ and $s$ lie in between the same integers.
Note that if $r \in \bbZ$ then $[r] = [s]$ implies that $r = s$.

\begin{defi}
We say that two states $s$ and $t$ in an IPG are equivalent, $s \sim t$,
if they are in the same location, $\loc(s) = \loc(t)$, and for all
$i, j \in \{1, \ldots, K\}$:
\begin{iteMize}{$\bullet$}
\item $[\var_i(s)] = [\var_i(t)]$, and
\item if $\{ \var_i(s) \} \leq \{ \var_j(s) \}$ then $\{ \var_i(t) \} \leq \{ \var_j(t) \}$.
\end{iteMize}
\end{defi}

Intuitively, all variables lie in the same integer intervals and the order
of fractional parts is preserved. In particular, it follows that all
integer variables are equal.
The following technical lemma allows for the shifting of moves between $\sim$-states.

\begin{lem} \label{shift-move}
Let $s$ and $s'$ be two states in a flat IPG over $\bbZ$ such that $s \sim s'$.
If a move from $s$ to $t$ is allowed by a label $l = (I, \ol{C}, R)$,
then there exists a state $t'$, the move to which from $s'$ is allowed by
the same label $l$ and $t' \sim t$.
\end{lem}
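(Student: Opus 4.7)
My plan is to build a delay $\tau'$ at $s'$ that imitates the delay $\tau$ used for $s \to t$. Since $\calG$ is flat the move $s \to t$ comes from some $\tau \in I$ with $\var_i(t) = \var_i(s) + \tau$ for $i \nin R$ and $\var_i(t) = 0$ for $i \in R$. A first observation is that the side condition $\var_i(s) \in C_i$ passes for free to $\var_i(s')$: the endpoints of $C_i$ are integers (the game is over $\Zinf$), and $s \sim s'$ keeps $\var_i(s)$ and $\var_i(s')$ in the same integer interval, which either sits inside $C_i$ or is disjoint from it. So the real work is picking $\tau' \in I$ so that the resulting $t'$, defined by $\var_i(t') = \var_i(s') + \tau'$ for $i \nin R$ (and $0$ for $i \in R$), belongs to the same $\sim$-class as $t$.

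To build $\tau'$, I would write $\tau = n + f$ with $n = \floor{\tau}$ and $f = \{\tau\} \in [0,1)$, and list the distinct fractional parts $\{\var_i(s)\}$ for $i \nin R$ in increasing order as $g_1 < \dots < g_p$. By $s \sim s'$ the corresponding fractional parts at $s'$ form a list $g_1' < \dots < g_p'$ with the same index partition. I then set $\tau' = n + f'$, where $f'$ is determined by the region $f$ occupies: if $f = 0$, take $f' = 0$; if $f = 1 - g_j$ for some $j \in \{1,\dots,p\}$, take $f' = 1 - g_j'$; otherwise $f$ lies in a unique open interval $(1 - g_{j+1}, 1 - g_j)$ (with the convention $g_0 = 0$, $g_{p+1} = 1$) and I pick any $f' \in (1 - g_{j+1}', 1 - g_j')$. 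Because $\tau$ and $\tau'$ both sit in $[n, n+1]$ and $I$ has integer endpoints, $\tau \in I$ forces $\tau' \in I$ (in the edge case $f = 0$ one simply keeps $\tau' = \tau$).

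The verification that $t' \sim t$ then splits coordinate by coordinate: resets match trivially, and for $i \nin R$ the choice of region for $f'$ guarantees that $\var_i(s') + \tau'$ wraps past an integer exactly when $\var_i(s) + \tau$ does, and becomes an integer exactly when the other does, so both land in the same integer interval. Preservation of the fractional-part ordering follows from the fact that wrapped and non-wrapped indices produce values in $[0, f)$ and $[f, 1)$ respectively, and analogously in $[0, f')$ and $[f', 1)$ at $t'$, while within each group the order is inherited from the $f_i$'s, hence, by $s \sim s'$, from the $f_i'$'s. The delicate point, and the one I expect to need the most care, is the boundary case $f = 1 - g_j$, where several $\var_i(t)$ collapse onto a common integer: here it is crucial that $\sim$ preserves equalities of fractional parts, so that setting $f' = 1 - g_j'$ collapses exactly the same block of indices onto a single integer at $t'$.
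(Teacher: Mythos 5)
Your proof is correct and takes a cleaner, more uniform route than the paper's. The paper branches on whether none, all, or a proper subset of the non-reset variables wrap past an integer when the delay is added, and in each branch it constructs one specific $w'$ (using an ad hoc $0.9\cdot$ safety margin to land strictly inside the right region). You instead expose the region structure that underlies all three of those cases at once: the fractional parts $g_1 < \dots < g_p$ at $s$ induce cut-points $1-g_p < \dots < 1-g_1$ partitioning $[0,1)$ into cells; by $\sim$ the analogous partition at $s'$ is order-isomorphic with the same index blocks; and any $f'$ in the cell matching the one containing $f$ produces the same wrap set, the same integer collisions (hence the same integer interval for each coordinate), and the same ordering of fractional parts (wrapped indices land in $[0,f')$, non-wrapped in $[f',1)$, with the order within each group inherited from $s'$, hence from $s$). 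Both proofs rest on the same idea --- replicate the combinatorial position of the delay relative to the fractional parts --- but your organization avoids the explicit three-way split and the margin constants, makes visible that $t'$ is determined only up to its $\sim$-class, and isolates the boundary case $f = 1-g_j$ (several variables collapsing onto a common integer) explicitly, which the paper's split does not single out. Your side observations are also right: the constraint $\var_i(s) \in C_i$ transfers because integer-endpoint constraints are unions of $\sim$-invariant integer intervals, and $\tau' \in I$ follows from $I$ having endpoints in $\Zinf$ together with the (implicit but true) fact that $f>0$ forces $f'>0$, so $\tau'$ stays in the open unit interval $(n,n+1)$ contained in $I$.
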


\begin{proof}
If $R = \{1, \ldots, K\}$ then let $t' = t$.
As $s \sim s'$, the same constraints are satisfied by $s$ and $s'$ and thus
the move from $s'$ to $t' = t$ is allowed by the same label.

If $R \neq \{1, \ldots, K\}$ then let $w = t -_R s \in I$ be the increment
chosen during the move. If $w \in \bbZ$ we let $t' = s'+w$, the conditions
follow from the assumption that $s \sim s'$ again.

If $w \not\in \bbZ$, let $i$ be the index of a non-reset variable
with the smallest fractional part in $t$, i.e.
$\{ \var_i(t) \} \leq \{ \var_j(t) \}$ for all $j \not\in R$.
To construct $t'$, we must choose $w'$ with $[w'] = [w]$ which makes
$\var_i(s'+w')$ the one with smallest fractional part. 

\emph{Case 1}: $\{ \var_i(t) \} \geq \{w\}$.\\
  In this case, for all non-reset variables $j$, holds
  $\{ \var_j(t) \} \geq \{w\}$, intuitively meaning
  that no variable ``jumped'' above an integer due to $\{w\}$.
  Let $l$ be the variable with maximum fractional part in $s'$
  (and thus, by definition of $\sim$, also in $s$ and in this case in $t$). Set 
  \[ w' = \floor{w} + 0.9 \cdot \left(\ceil{\var_l(s')} - \var_l(s')\right) .\]
  Clearly $[w'] = [w]$ and indeed, we preserved the order of
  fractional parts and integer intervals, thus $\sim$ is preserved.

\begin{figure}[h]
\begin{center}
\begin{tikzpicture}
\draw (-5, 0) -- (5, 0);
\draw[thick] (-4, 0.2) -- (-4, -0.2);
\node (a) at (-4, 0.5) {$0$};

\draw[thick] (-3, 0.1) -- (-3, -0.1);
\node (b0) at (-3, 0.5) {$\{\var_i(s)\}$};

\draw[thick] (-1, 0.1) -- (-1, -0.1);
\node (b1) at (-1, 0.5) {$\{\var_i(t)\}$};

\draw [snake=brace, mirror snake, raise snake=8pt]
   (-3,0) to node [below=15pt] {\small{$\{w\}$}} (-1,0);

\draw[thick] (2.5, 0.1) -- (2.5, -0.1);
\node (b3) at (2.5, 0.5) {$\{\var_l(s')\}$};

\draw [snake=brace, mirror snake, raise snake=8pt] (2.5,0) to
  node [below=15pt] {\small{$\ceil{\var_l(s')} - \var_l(s')$}} (4,0);

\draw [thick] (4, 0.2) -- (4, -0.2);
\node (d) at (4, 0.5) {$1$};
\end{tikzpicture}
\end{center}
\caption{Lemma~\ref{shift-move} Case 1}
\label{fig-case-1}
\end{figure}

\emph{Case 2}: $\{ \var_i(t) \} < \{w\}$ and for all $j \not\in R$
$\{ \var_j(s') \} \geq \{ \var_i(s') \}$.\\
  In this case, for all non-reset variables $j$, holds
  $\{ \var_j(t) \} \leq \{w\}$, intuitively meaning
  that all variables ``jumped'' above an integer due to $\{w\}$.
  Let $l$ be the variable with maximum fractional part in $s'$
  (and thus also in $s$). Let
  \[ \delta = 0.9 \cdot \min\left(
        \{ \var_i(s') \}, \left( \ceil{\var_l(s')} - \var_l(s') \right) 
      \right)\]
  be a number smaller than both $\{ \var_i(s') \}$ and 
  $\ceil{\var_l(s')} - \var_l(s')$. We set
  \[ w' = \floor{w} + \ceil{\var_i(s')} - \var_i(s') + \delta.\]
  By the first assumption on $\delta$ we have $[w'] = [w]$ and both the order
  of fractional parts and integer bounds in $t'$ are the same as in $t$, since
  \[ \ceil{\var_l(t')} = \ceil{\var_l(s'+w')} \leq
     \ceil{\var_l(s') + \floor{w} + 1 + \delta} = \ceil{\var_l(t)} \]
  by the second assumption on $\delta$. The inequality in the other
  direction holds as well, and we get that $t' \sim t$ as required.

\begin{figure}[h]
\begin{center}
\begin{tikzpicture}
\draw (-5, 0) -- (5, 0);
\draw[thick] (2, 0.2) -- (2, -0.2);
\node (a) at (2, 0.5) {$1$};

\draw[thick] (-4, 0.1) -- (-4, -0.1);
\node (b0) at (-4, 0.5) {$\{\var_i(s')\}$};

\draw[thick] (-2, 0.1) -- (-2, -0.1);
\node (b0) at (-2, 0.5) {$\{\var_i(s)\}$};

\draw[thick] (0.5, 0.1) -- (0.5, -0.1);
\node (b0) at (0.5, 0.5) {$\{\var_l(s')\}$};

\draw [snake=brace, mirror snake, raise snake=2pt]
   (0.5,0) to node [below=2pt] {\small{$\delta$}} (1,0);

\draw [snake=brace, mirror snake, raise snake=12pt] (-4,0) to
  node [below=15pt] {\small{$\ceil{\var_i(s')} - \var_i(s')$}} (2,0);

\draw[thick] (4, 0.1) -- (4, -0.1);
\node (b3) at (4, 0.5) {$\{\var_i(t)\}+1$};
\end{tikzpicture}
\end{center}
\caption{Lemma~\ref{shift-move} Case 2}
\label{fig-case-2}
\end{figure}

\emph{Case 3}: $\{ \var_i(t) \} < \{w\}$ and there exists $j \not\in R$
  with $\{ \var_j(s') \} < \{ \var_i(s') \}$.\\
  In this case let $l$ be the variable with maximum fractional part in $t$,
  \ie the last one which did not ``jump'' above an integer due to $\{w\}$.
  The variable with next bigger fractional part in $s$ (and by $\sim$ also
  in $s'$) is $\var_i(s)$, as depicted in Figure~\ref{fig-case-3-s}.

\begin{figure}[h]
\begin{center}
\begin{tikzpicture}
\draw (-5, 0) -- (5, 0);
\draw[thick] (2, 0.2) -- (2, -0.2);
\node (a) at (2, 0.5) {$1$};

\draw[thick] (-4, 0.1) -- (-4, -0.1);
\node (b0) at (-4, 0.5) {$\{\var_l(s)\}$};

\draw[thick] (-1, 0.1) -- (-1, -0.1);
\node (b0) at (-1, 0.5) {$\{\var_i(s)\}$};

\draw[thick] (0.5, 0.1) -- (0.5, -0.1);
\node (b0) at (0.5, 0.5) {$\{\var_l(t)\}$};

\draw [snake=brace, mirror snake, raise snake=6pt]
   (-4,0) to node [below=12pt] {\small{$\{ w \}$}} (0.5,0);

\draw[thick] (3.5, 0.1) -- (3.5, -0.1);
\node (b3) at (3.5, 0.5) {$\{\var_i(t)\}+1$};
\end{tikzpicture}
\end{center}
\caption{Lemma~\ref{shift-move} Case 3 for $s$}
\label{fig-case-3-s}
\end{figure}

  To transfer the move to $s'$, consider these two variables in $s'$ as
  depicted in Figure~\ref{fig-case-3-sprime} and let
  $\delta = \{\var_i(s')\} - \{\var_l(s')\}$.

\begin{figure}[h]
\begin{center}
\begin{tikzpicture}
\draw (-5, 0) -- (5, 0);
\draw[thick] (2, 0.2) -- (2, -0.2);
\node (a) at (2, 0.5) {$1$};

\draw[thick] (-4, 0.1) -- (-4, -0.1);
\node (b0) at (-4, 0.5) {$\{\var_l(s')\}$};

\draw[thick] (-1, 0.1) -- (-1, -0.1);
\node (b0) at (-1, 0.5) {$\{\var_i(s')\}$};

\draw [snake=brace, mirror snake, raise snake=6pt]
   (-4,0) to node [below=12pt] {\small{$\delta$}} (-1,0);

\draw [snake=brace, mirror snake, raise snake=6pt]
   (-1,0) to node [below=12pt] {\small{$\ceil{\var_i(s')} - \var_i(s')$}} (2,0);
\end{tikzpicture}
\end{center}
\caption{Lemma~\ref{shift-move} Case 3 for $s'$}
\label{fig-case-3-sprime}
\end{figure}

  We set $w' = \floor{w} + \ceil{\var_i(s')} - \var_i(s') + 0.9 \cdot \delta.$
  Again $[w'] = [w]$ and clearly $i$ is the variable with smallest fractional
  part in $t'$ by construction. As $s \sim s'$, the order of fractional
  parts in $t$ and in $t'$ is the same, and the integer bounds as well,
  thus $t \sim t'$.
\end{proof}

\subsection{Choosing Discrete Moves}\label{subsec_choose}

Knowing that we can shift a single move and preserve $\sim$-equivalence,
we proceed to show that for IPGs over $\Zinf$, fully general strategies are
not necessary. In fact, we can restrict ourselves to discrete strategies
and, using this, reduce the games to discrete systems.
Intuitively, a discrete strategy keeps the maximal distance of 
all variable valuations to the closest integer small.

However, for the purposes of constructing an inductive proof 
of existence of a good discrete strategy,
it is not convenient to work, for a state $s$, simply with the maximal distance
\[ \max_i \{ \min \{ \var_i(s) - \floor{\var_i(s)},
                   \ceil{\var_i(s)} - \var_i(s) \} \}. \]
The reason is that for some moves it is impossible to keep this distance
small for each variable and to go to an equivalent state as illustrated
in Figure \ref{fig-prob}. In the depicted situation, if we move $y_1$ within
$\epsilon$-neighbourhood of $\bbZ$ (below $z$ and $z-1$ depict integers),
then $y_0$ leaves it.

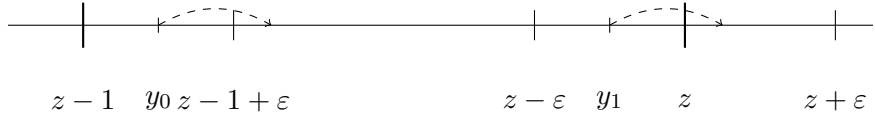
\begin{figure}[h]
\begin{center}
\begin{tikzpicture}
\draw (-5, 0) -- (6.5, 0);
\draw[thick] (-4, 0.3) -- (-4, -0.3);
\node (z0) at (-4, -1) {$z-1$};
\draw (-3, 0.1) -- (-3, -0.1);
\node (z3) at (-3, -1) {$y_0$};
\draw (-2, 0.2) -- (-2, -0.2);
\node (z2) at (-2, -1) {$z-1+\epsilon$};
\draw [thick] (4, 0.3) -- (4, -0.3);
\node (z1) at (4, -1) {$z$};
\draw (2, 0.2) -- (2, -0.2);
\node (e) at (2, -1) {$z-\epsilon$};
\draw (3, 0.1) -- (3, -0.1);
\node (y) at (3, -1) {$y_1$};
\draw (6, 0.2) -- (6, -0.2);
\node (z4) at (6, -1) {$z+\epsilon$};
\path[->,dashed] (3,0) edge [bend left] (4.5,0);
\path[->,dashed] (-3,0) edge [bend left] (-1.5,0);
\end{tikzpicture}
\end{center}
\caption{Move where standard distance is necessarily increased.}
\label{fig-prob}
\end{figure}

\noindent To give a more suitable notion of distance for a state,
let us, for $r \in \setR$, define
\begin{displaymath}
\di(r) = \left\{
\begin{array}{ll} 
r-\ceil{r} & 
    \text{ if } |r-\ceil{r}| \leq |r-\floor{r}|; \\
r-\floor{r} & 
    \text{ otherwise.}
\end{array} \right.
\end{displaymath}
This function gives the distance to the closest integer, except
that it is negative if the closest integer is greater than $r$,
\ie if the fractional part of $r$ is $> \frac{1}{2}$.
as depicted in Figure \ref{fig-dist}.

\begin{figure}[h]
\begin{center}
\begin{tikzpicture}
\draw (-5, 0) -- (5, 0);
\draw[thick] (-4, 0.3) -- (-4, -0.3);
\node (a) at (-4, 0.5) {$\floor{r}$};
\draw (-3, 0.1) -- (-3, -0.1);
\node (b) at (-3, 0.5) {$r$};
\draw [thick] (0, 0.3) -- (0, -0.3);
\node (c) at (0, 0.5) {$\ceil{r}= \floor{s}$};
\draw [thick] (4, 0.3) -- (4, -0.3);
\node (d) at (4, 0.5) {$\ceil{s}$};
\draw (3, 0.1) -- (3, -0.1);
\node (e) at (3, 0.5) {$s$};
\draw [snake=brace, mirror snake, raise snake=10pt](3,0) to node [below=15pt]
   {\small{$\di(s)<0$}}(4,0);
\draw [snake=brace, mirror snake, raise snake=10pt](-4,0) to node [below=15pt]
   {\small{$\di(r)>0$}}(-3,0);
\end{tikzpicture}
\end{center}
\caption{Notation for distances between real numbers and integers.}
\label{fig-dist}
\end{figure}
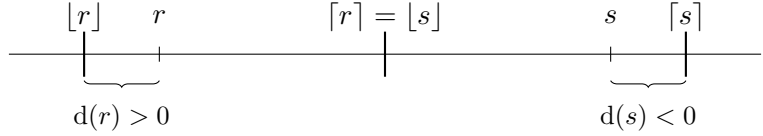

Please observe that for two real numbers $a,b \in \setR_+$,
it follows that
\[ |\di(a + b)| \leq |\di(a)| + |\di(b)|.\]
Also, we observe that 
\begin{iteMize}{$\bullet$}
\item if $|\di(a) + \di(b)|< \frac 1 2$, then $\di(a + b) = \di(a) + \di(b)$;
\item otherwise, if  $\di(a),\di(b) = \frac 1 2$ or $\di(a),\di(b)=0$, then $\di(a + b)=0$;
\item otherwise, if $\di(a),\di(b)>0$, then $\di(a + b) = \di(a) + \di(b) - 1 < 0$;
\item if $\di(a),\di(b)<0$, then $\di(a + b) = \di(a) + \di(b) + 1 > 0$.
\end{iteMize}

For a state $s$, we use the abbreviation $\di_i(s) = \di(\var_i(s))$.
We denote by $\dleft(s)= \min_{i=1 \ldots k}\{\di_i(s)\}$ and
$\dright(s)= \max_{i=1 \ldots k}\{\di_i(s)\}$ the smallest and biggest
of all values $\di_i(s)$, and additionally we define the total distance
as follows
\begin{displaymath} \label{max-total-dist}
\dih(s) = \left\{
\begin{array}{ll} 
|\dleft(s)| & 
    \text{ if } \di_i (s)\leq 0 \text{ for all } i \in \{1, \ldots, k\},\\
\dright(s) & 
    \text{ if } \di_i(s) \geq 0 \text{ for all } i \in \{1, \ldots, k\},\\
|\dleft(s)|+ \dright(s) & 
    \text{ otherwise.}
\end{array} \right.
\end{displaymath}

This is illustrated in Figure~\ref{fig-total-dists}, where $k$ stands for 
an integer and $y_0$ to $y_2$ stand for the fractional parts of the values 
of the respective variables.
In this example, $y_0$ has the smallest fractional part, \ie the biggest one bigger
than $\frac 1 2$ and $y_2$ has the biggest fractional part (less than $\frac 1 2$).

\begin{figure}[b]
\begin{center}
\begin{tikzpicture}
\draw (-5, 0) -- (5, 0);
\draw[thick] (-4, 0.2) -- (-4, -0.2);
\node (a) at (-4, -0.5) {$k - \frac{1}{2}$};
\draw[thick] (-3, 0.1) -- (-3, -0.1);
\node (b0) at (-3, 0.5) {$y_0$};
\draw[thick] (1.5, 0.1) -- (1.5, -0.1);
\node (b1) at (1.5, 0.5) {$y_1$};
\draw[thick] (2.5, 0.1) -- (2.5, -0.1);
\node (b3) at (0.4, 0.5) {$y_3$};
\draw[thick] (0.4, 0.1) -- (0.4, -0.1);
\node (b2) at (2.5, 0.5) {$y_2$};
\draw [thick] (0, 0.2) -- (0, -0.2);
\node (c) at (0, -0.5) {$k$};
\draw [thick] (4, 0.2) -- (4, -0.2);
\node (d) at (4, -0.5) {$k+\frac{1}{2}$};
\draw [snake=brace, mirror snake, raise snake=8pt]
   (0,0) to node [below=15pt] {\small{$\dright$}} (2.5,0);
\draw [snake=brace, mirror snake, raise snake=8pt]
   (-3,0) to node [below=15pt] {\small{$- \dleft$}} (0,0);
\draw [snake=brace, raise snake=8pt]
   (-3,0) to node [above=15pt] {\small{$\dih$}} (2.5,0);
\end{tikzpicture}
\end{center}
\caption{Maximal, minimal and total distances for a state.}
\label{fig-total-dists}
\end{figure}
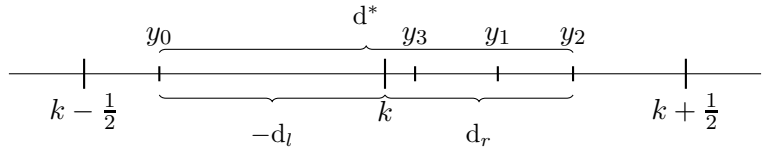

First, we will prove that we can always correct a strategy that makes
one step which is not $\epsilon$-discrete. By doing so, we will
guarantee that we reach a state with the same location that is allowed
by the labelling and that the values of the variables only change
within the same intervals.

\begin{lem}\label{discrete_succ}
Let $s$ be a state with $\dih(s) \leq \frac 1 4$ and $t$ be a successor
of $s$, where $(s,t)$ is allowed by $l=(I, \ol{C}, R)$.
Then, for every $0 \leq \epsilon < \dih(s)$, there exists
a successor $t'_+$ of $s$ such that 
\begin{iteMize}{$\bullet$}
\item
  $t \sim t'_+$,
\item 
  $(s, t'_+)$ is allowed by $l$, and
\item
  $\dih(t'_+) \leq \dih(s) + \epsilon$.
\end{iteMize}
\end{lem}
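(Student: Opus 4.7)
I aim to construct $t'_+$ by choosing a time increment $w'$ close to the original $w$ used in the move $(s,t)$, so that $t'_+ \sim t$ holds, $(s, t'_+)$ is allowed by the same label $l = (I, \ol{C}, R)$, and $\dih(t'_+)$ exceeds $\dih(s)$ by at most $\epsilon$. The argument splits into two easy sub-cases and one main sub-case.

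First, if $R = \{1,\ldots,M\}$, every successor of $s$ via $l$ has all variables reset to $0$, so I simply take $t'_+ := t$; then $\dih(t'_+) = 0$ and $t \sim t'_+$ trivially, while the allowability is inherited from $(s,t)$. Second, if $w \in \bbZ$, I again take $w' := w$, so $t'_+ := t$: an integer increment leaves every $\di_i$ unchanged on non-reset coordinates and produces $\di_i = 0$ on reset ones, and a short case distinction on the sign pattern of the $\di_i(s)$ (all $\geq 0$, all $\leq 0$, or mixed) yields $\dih(t'_+) \leq \dih(s)$.

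In the main case, $R \neq \{1,\ldots,M\}$ and $w \notin \bbZ$. Since $I$ has integer endpoints and $w$ lies strictly inside $I$, the open unit interval $(\floor{w},\ceil{w})$ is contained in $I$, so any $w' \in (\floor{w},\ceil{w})$ yields a move $(s, t'_+)$ allowed by $l$. I reduce the $\sim$-preservation between $t$ and $t'_+$ to a single constraint on $\delta := w' - w$: the condition that every non-reset $\var_i(s) + w'$ lies in the same integer interval as $\var_i(s) + w$ is exactly $\delta \in (-\min_i \{\var_i(t)\},\, 1 - \max_i \{\var_i(t)\})$, with $\min$ and $\max$ taken over non-reset $i$, while the ordering condition on fractional parts holds automatically because the differences $\var_i(s) - \var_j(s)$ do not depend on $w'$.

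For the ``$+$'' variant I pick $\delta := 1 - \max_i \{\var_i(t)\} - \eta$ for some $\eta \in (0,\epsilon)$, which pulls the non-reset variable $i_M$ attaining the maximum to within $\eta$ of the next integer from below, so $\di_{i_M}(t'_+) = -\eta$. The remaining non-reset variables shift by the same $\delta$, and the position of each shifted fractional part $\{\var_j(t)\} + \delta$ relative to $\frac 1 2$ determines the sign of $\di_j(t'_+)$. The hypothesis $\dih(s) \leq \frac 1 4$ bounds every $|\di_i(s)| \leq \frac 1 4$, which restricts the spread of the $\{\var_i(t)\}$ enough for a direct bookkeeping argument to yield $\dih(t'_+) \leq \dih(s) + \epsilon$. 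The technically delicate sub-case is when some non-reset variables ``cross'' an integer due to $\{w\}$ (that is, $\{\var_i(s)\} + \{w\} \geq 1$) and others do not: then $\{\var_i(t)\}$ spans nearly the whole unit interval and the admissible $\delta$-window becomes narrow, but that very narrowness is what forces every non-reset variable in $t'_+$ to sit within $\eta$ of an integer on the appropriate side, with the slack $\epsilon$ absorbing any remaining discrepancy.
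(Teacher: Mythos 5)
Your overall plan — perturb the time increment $w$ to $w'$ so that all non-reset variables slide together toward the same integer — is the same idea as the paper's proof, but your execution has a genuine gap in the choice of $\delta$.

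The critical issue is the constraint you drop when you ``reduce $\sim$-preservation to a single constraint on $\delta$.'' You observe that $(\floor{w},\ceil{w})\subseteq I$ and that the window
$\delta\in\bigl(-\min_i\{\var_i(t)\},\,1-\max_i\{\var_i(t)\}\bigr)$
preserves all the integer intervals $[\var_i(t)]$. But you then pick $\delta := 1-\max_i\{\var_i(t)\}-\eta$ with $\eta\in(0,\epsilon)$ without ensuring that the resulting $w'=w+\delta$ still lies in $(\floor{w},\ceil{w})$, i.e.\ that $\delta<1-\{w\}$ as well. When $\max_i\{\var_i(t)\}<\{w\}$, which happens precisely when every non-reset variable ``crosses'' an integer, your $\delta$ exceeds $1-\{w\}$ and $w'>\ceil{w}$, so $w'\notin I$ once $I=[\floor{w},\ceil{w}]$. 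Concretely: one non-reset variable with $\{\var_1(s)\}=0.9$, $\{w\}=0.9$, $I=[0,1]$, gives $\dih(s)=0.1$, $\{\var_1(t)\}=0.8$, $\dih(t)=0.2$; your recipe yields $\delta=0.2-\eta$ and $w'=1.1-\eta>1$ for $\eta<0.1$, which is not in $I$, while the lemma requires $\epsilon<\dih(s)=0.1$ so $\eta<0.1$. The paper's proof avoids this by clamping the correction by $|\di(w)|$ and by splitting into the three sign-pattern cases (all $\di_i(t)\le 0$, all $\ge 0$, mixed), sometimes correcting $w$ upward and sometimes using $1-\corr$ to land just below $\ceil{w}$. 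That case analysis is not cosmetic: it is exactly what keeps $w'$ inside the time interval. Relatedly, the hypothesis $\epsilon<\dih(s)$, which the paper uses to deduce $|\di(w)|>\epsilon$ and hence $w'\ge\floor{w}$, plays no role in your argument, which is a sign that the interval constraint on $w'$ is not being controlled.

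A second, smaller problem is that the final ``bookkeeping argument'' establishing $\dih(t'_+)\le\dih(s)+\epsilon$ in the mixed sub-case is asserted rather than carried out; the paper devotes the bulk of Cases 1--3 to precisely that verification (tracking $\dleft$, $\dright$ and how $\corr-\epsilon$ interacts with them), and it is not obvious from your sketch that the inequality falls out. Finally, your $\delta$-window $(-\min_i\{\var_i(t)\},\,1-\max_i\{\var_i(t)\})$ silently assumes no non-reset $\var_i(t)$ is an integer; if one is, the $\sim$-condition forces $\delta=0$, and that degenerate case needs to be noted and dispatched separately (it is benign, but it breaks the open-window formulation). To repair the proof you would need, at minimum, to intersect your $\delta$-window with $(-\{w\},\,1-\{w\})$ and show the intersection always leaves room to push the maximal fractional part within $\eta$ of an integer on one side or the other — which essentially recreates the paper's case split.
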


\begin{proof}
We assume that $\dih(t) > \dih(s) + \epsilon$,
otherwise we can take $t'_+ = t$.
Let $w \in I$ be the increase in the (non-reset) values from $s$ to $t$, \ie $w = t -_R s$.
We make a case distinction regarding the computation of $\dih(t)$.

\textit{Case 1:} $\dih(t)=|\dleft(t)|$.\\
We correct $w$ in the following way: $w' = w + \corr -\epsilon$, where
$\corr = \min\{|\dright(t)|, |d(w)|\}$ if $\di(w)<0$ and $\corr = |\dright(t)|$
otherwise.

First, we have to show that $[w'] \in [w]$ and therefore $w' \in I$.
Since $\dleft(t) = \di_i(t)= \var_i(t)$ for one $i$, 
we can conclude from $|\di(\var_i(s)+w)| \leq |\di(\var_i(s))| + |\di(w)|$
that $|\di(w)| > \epsilon$ and therefore $w' \geq w$,
hence $w' \geq \floor{w}$.
Furthermore, $w'\leq \ceil{w}$. Otherwise, if $\di(w) < 0$ then
$w' =  w + \corr - \epsilon > \ceil{w} = w + |\di(w)|$. This is a
contradiction, since by definition $\corr \leq |\di(w)|$.

If $\di(w) > 0$, we also conclude $w'\leq \ceil{w}$, since 
$\corr - \epsilon < \frac 1 2$.

Next, we have to show, that all variables that are not reset stay in the 
same interval. We consider the case, where all values of the variables are
increased, therefore we know that $\var_i(t'_+) \geq \floor{\var_i(t)}$
for all $i \not \in R$.
We now have to show that also $\var_i(t'_+) \leq \ceil{\var_i(t)}$.
Let $j$ be the index of the variable which is the closest to the integers
(in this case), \ie $j$, such that $\di(\var_j(t))=\dright(t)$.
\begin{eqnarray*}
\var_j(t'_+) & = &  \var_j(s) + w' \\ 
   & = & \var_j(s)+ w + \corr - \epsilon \\
   & = & \var_j(t) + \corr - \epsilon \\
   & < & \ceil{\var_i(t)} = \var_j(t) + |\dright(t)| \\
\end{eqnarray*}
Also, we have to show: $\dih(t'_+) \leq \dih(s) + \epsilon$.
We know that $|\dleft(t)|-|\dright(t)| \leq \dih(s)$ and
$\dih(t'_+) = |\dleft(t'_+)|=|\di(\var_j(t'_+))|$ for one $j$
and
$\var_j(t'_+) = \var_j(s) + w + \corr - \epsilon$.
Hence, $\di(\var_j(t'_+)) = \dleft(t) + \corr - \epsilon$, since
$|\dleft(t) + \corr - \epsilon| \leq \frac 1 2$.
We can conclude that $\dleft(t'_+)=\di(\var_j(t'_+)) \leq \dih(s) + \epsilon$.
\begin{figure}[h]
\begin{center}
\begin{tikzpicture}
\small
\tikzstyle{var_style}=[draw,fill=black, circle, minimum size=1mm, inner sep=0pt];
\node[] (s) at (-2,0) {$t =$};
\draw (-1, 0) -- (5, 0);
\draw[thick] (0, 0.2) -- (0, -0.2);
\draw[] (2, 0.1) -- (2, -0.1);
\node (x1c) at (3.5,0.2){\tiny{$y_1$}};
\node[var_style] (x1) at (3.5,0){};
\node (x2c) at (2.8,0.2){\tiny{$y_2$}};
\node[var_style] (x2) at (2.8,0) {};
\draw[thick] (4, 0.2) -- (4, -0.2);
\draw [snake=brace, raise snake=8pt]
    (x1) to node [above=8pt] {$|\dright|$} (4, 0);
\draw [snake=brace, raise snake=24pt]
    (x2) to node [above=24pt] {$|\dleft| > \dih(s) + \epsilon$} (4, 0);
\draw [snake=brace, raise snake=8pt]
    (0, 0) to node [above=8pt] {$1 - |\dleft|$} (x2);
\draw [snake=brace, mirror snake, raise snake=6pt]
    (x2) to node [below=8pt] {$\leq \dih(s)$} (x1);

\node[] (ss) at (-2,-3) {$t'_+ =$};
\draw (-1, -3) -- (5, -3);
\draw[thick] (0, -2.8) -- (0, -3.2);
\node (x1cp) at (3.8,-3.2){\tiny{$y_1$}};
\node[var_style] (x1p) at (3.8,-3){};
\node (x2cp) at (3.1,-3.2){\tiny{$y_2$}};
\node[var_style] (x2p) at (3.1,-3){};
\draw[] (2, -2.9) -- (2, -3.1);
\draw[thick] (4, -2.8) -- (4, -3.2);
\draw [snake=brace, raise snake=8pt]
    (x2p) to node [above=8pt] 
      {$|\dleft| \leq \dih(s) + \epsilon$} (4, -3);
\end{tikzpicture}
\end{center}
\caption{Case 1 illustration}
\label{correct1}
\end{figure}
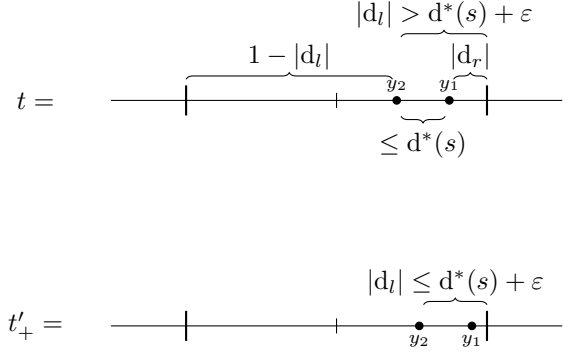

\textit{Case 2:} $\dih(t)=|\dright(t)|$.

\textit{Subcase 1: $\di(w)>0$:}\\
We correct $w$ in the following way: $w' = w + (1 - \corr) - \epsilon$, where
$\corr = \max\{|\dleft(t)|, |d(w)|\}$.

First, we have to show that $[w'] \in [w]$ and therefore $w' \in I$.
Since $\dright(t) = \di_i(t)= \var_i(t)$ for one $i$, 
we can conclude from $|\di(\var_i(s)+w)| \leq |\di(\var_i(s))| + |\di(w)|$
that $|\di(w)| > \epsilon$ and therefore $w' \geq w$,
hence $w' \geq \floor{w}$.
Furthermore, $w' \leq \ceil{w}$.
Otherwise, since $\di(w) > 0$ and we assume that
$w' =  w + (1 -\corr) - \epsilon > \ceil{w} = w + (1 - |\di(w)|)$. This is a
contradiction, since by definition $\corr \geq |\di(w)|$.

Next, we have to show, that all variables that are not reset stay in the 
same interval. We consider the case, where all values of the variables are
increased, therefore we know that $\var_i(t'_+) \geq \floor{\var_i(t)}$
for all $i \not \in R$.
We now have to show that also $\var_i(t'_+) \leq \ceil{\var_i(t)}$.
Let $j$ be the index of the variable which is the closest to the integers
(in this case), \ie $j$, such that $\di(\var_j(t))=\dleft(t)$.
\begin{eqnarray*}
\var_j(t'_+) & = &  \var_j(s) + w' \\ 
   & = & \var_j(s)+ w + (1 - \corr) - \epsilon \\
   & = & \var_j(t) + (1- \corr) - \epsilon \\
   & < & \ceil{\var_i(t)} = \var_j(t) + (1 - |\dleft(t)| \\
\end{eqnarray*}

Also, we have to show: $\dih(t'_+) \leq \dih(s) + \epsilon$.
We know that $\dright(t)-\dleft(t) \leq \dih(s)$ and
$\dih(t'_+) = |\dright(t'_+)|=|\di(\var_j(t'_+))|$ for one $j$.
$\var_j(t'_+) = \var_j(s) + w + ( 1- \corr) - \epsilon$.
Hence, $\di(\var_j(t'_+)) = \dright(t) + (1 - \corr) + 
\epsilon - 1 = \dright(t) - \corr + \epsilon$.
We can conclude that $\dright(t'_+)=\di(\var_j(t'_+)) \leq \dih(s) + \epsilon$.
by definition of $\corr$.

\textit{Subcase 2: $\di(w)<0:$}\\
In this case, from $\dih(s) < \frac 1 4$ and $\dih(t)=\dright(t)$
it follows that $\di(var_i(s)) < 0$ for all $i$.
Thus, we set $w' = w + \ceil{w} - \epsilon$ and 
the lemma holds.

\begin{figure}[h]
\begin{center}
\begin{tikzpicture}
\small
\tikzstyle{var_style}=[draw,fill=black, circle, minimum size=1mm, inner sep=0pt];
\node[] (s) at (-2,0) {$t =$};
\draw (-1, 0) -- (5, 0);
\draw[thick] (0, 0.2) -- (0, -0.2);
\draw[] (2, 0.1) -- (2, -0.1);
\node (x1c) at (1.2,0.2){\tiny{$y_1$}};
\node[var_style] (x1) at (1.2,0){};
\node (x2c) at (0.5,0.2){\tiny{$y_2$}};
\node[var_style] (x2) at (0.5,0){};
\draw[thick] (4, 0.2) -- (4, -0.2);
\draw [snake=brace, raise snake=8pt]
    (0,0) to node [above=8pt] {$|\dleft|$} (x2);
\draw [snake=brace, raise snake=24pt]
    (0,0) to node [above=24pt] {$|\dright| > \dih(s) + \epsilon$} (x1);
\draw [snake=brace, raise snake=8pt]
    (x1) to node [above=8pt] {$1 - |\dright$|} (4,0);
\draw [snake=brace, mirror snake, raise snake=6pt]
    (x2) to node [below=8pt] {$\leq \dih(s)$} (x1);

\node[] (ss) at (-2,-3) {$t'_+ =$};
\draw (-1, -3) -- (5, -3);
\draw[thick] (0, -2.8) -- (0, -3.2);
\node (x1cp) at (3.8,-3.2){\tiny{$y_1$}};
\node[var_style] (x1p) at (3.8,-3){};
\node (x2cp) at (3.1,-3.2){\tiny{$y_2$}};
\node[var_style] (x2p) at (3.1,-3){};
\draw[] (2, -2.9) -- (2, -3.1);
\draw[thick] (4, -2.8) -- (4, -3.2);
\draw [snake=brace, raise snake=8pt]
    (x2p) to node [above=8pt] 
      {$|\dleft| \leq \dih(s) + \epsilon$} (4, -3);
\end{tikzpicture}
\end{center}
\caption{Case 2 illustration}
\label{correct2}
\end{figure}
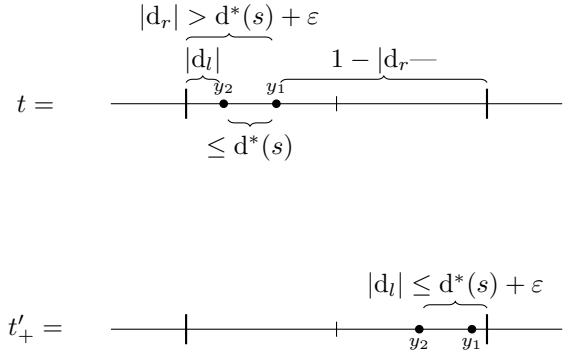

\textit{Case 3:} $\dih(t)=\dright(t)+|\dleft(t)|$.\\
We correct $w$ in the following way: $w' = w + \corr - \frac \epsilon 2$, where
$\corr = \min\{|\dleft(t)|, |d(w)|\}$.

First, we have to show that $[w'] \in [w]$ and therefore $w' \in I$.
Since $\dright(t) = \di_i(t)= \var_i(t)$ for one $i$ and 
$\dleft(t) = \di_j(t)= \var_j(t)$ for one $j$,
we can conclude from $|\di(\var_i(s)+w)| \leq |\di(\var_i(s))| + |\di(w)|$
and $|\di(\var_j(s)+w)| \leq |\di(\var_j(s))| + |\di(w)|$
and $|\di(\var_j(s)+w)|+|\di(\var_i(s)+w)| \leq |\di(\var_i(s))| + |\di(w)|+ |\di(\var_j(s))| + |\di(w)| \leq \dih(s) + 2|\di(w)|$
and $|\di(\var_j(s)+w)|+|\di(\var_i(s)+w)| > \dih(s) + \epsilon$ therefore
$|\di(w)| > \frac \epsilon 2$.
Hence, $w' \geq \floor{w}$.
Furthermore, $w' \leq \ceil{w}$, otherwise
if $\di(w) < 0$ then
assume $w' =  w + \corr - \frac \epsilon 2 > \ceil{w} = w + |\di(w)|$.
Then $\corr - \frac \epsilon 2 > |\di(w)|$. Contradiction.
Otherwise, if $\di(w) > 0$, then
$w' \leq \ceil{w}$, since by definition $\corr \leq \frac 1 2$.

Next, we have to show, that all variables that are not reset stay in the 
same interval. We consider the case, where all values of the variables are
increased, therefore we know that $\var_i(t'_+) \geq \floor{\var_i(t)}$
for all $i \not \in R$.
We now have to show that also $\var_i(t'_+) \leq \ceil{\var_i(t)}$.
Let $j$ be the index of the variable with
$\di(\var_j(t)) = \dleft(t)$.
\begin{eqnarray*}
\var_j(t'_+) & = &  \var_j(t) + w' \\ 
   & = & \var_j(t)+ w + \corr - \frac \epsilon 2\\
   & = & \var_j(t) + \corr - \frac \epsilon 2\\
   & < & \ceil{\var_i(t)} = \var_j(t) + |\dleft(t)| \\
\end{eqnarray*}

Thus we have to show: $\dih(t'_+) \leq \dih(s) + \epsilon$.
We know that $|\dright(t) - (1 + \dleft(t))| \leq \dih(s)$ and
$\dih(t'_+) = |\dleft(t'_+)| = |\di(\var_i(t'_+))|$ for $j$
such that $\di(\var_j(t))=\dright(t)$.
Also, $\var_j(t'_+) = \var_j(s) + w + \corr - \frac \epsilon 2$.
We can conclude that $\dih(t'_+) \leq \dih(s) + \frac \epsilon 2$.

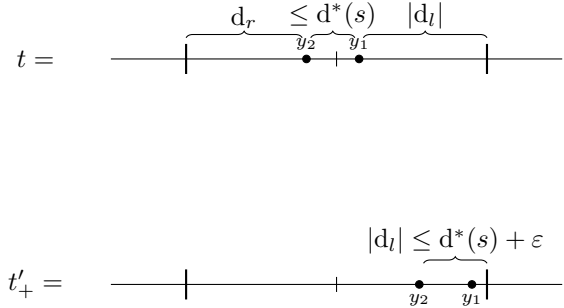
\begin{figure}[h]
\begin{center}
\begin{tikzpicture}
\small
\tikzstyle{var_style}=[draw,fill=black, circle, minimum size=1mm, inner sep=0pt];

\node[] (ss) at (-2,0) {$t =$};
\draw (-1, 0) -- (5, 0);
\draw[thick] (0, 0.2) -- (0, -0.2);
\draw[] (2, 0.1) -- (2, -0.1);
\node (x1c) at (2.3,0.2){\tiny{$y_1$}};
\node[var_style] (x1) at (2.3,0){};
\node (x2c) at (1.6,0.2){\tiny{$y_2$}};
\node[var_style] (x2) at (1.6,0){};
\draw[thick] (4, 0.2) -- (4, -0.2);
\draw [snake=brace, raise snake=8pt]
    (x1) to node [above=8pt] {$|\dleft|$} (4,0);
\draw [snake=brace, raise snake=8pt]
    (0,0) to node [above=8pt] {$\dright$} (x2);
\draw [snake=brace, raise snake=8pt]
    (x2) to node [above=8pt] {$\leq \dih(s)$} (x1);

\node[] (ss) at (-2,-3) {$t'_+ =$};
\draw (-1, -3) -- (5, -3);
\draw[thick] (0, -2.8) -- (0, -3.2);
\node (x1cp) at (3.8,-3.2){\tiny{$y_1$}};
\node[var_style] (x1p) at (3.8,-3){};
\node (x2cp) at (3.1,-3.2){\tiny{$y_2$}};
\node[var_style] (x2p) at (3.1,-3){};
\draw[] (2, -2.9) -- (2, -3.1);
\draw[thick] (4, -2.8) -- (4, -3.2);
\draw [snake=brace, raise snake=8pt]
    (x2p) to node [above=8pt] 
      {$|\dleft| \leq \dih(s) + \epsilon$} (4, -3);

\end{tikzpicture}
\end{center}
\caption{Case 3 illustration}
\label{correct3}
\end{figure}
\end{proof}

Knowing that, in one step, the move can always preserve small
total distance, we can finally define discrete strategies.

\begin{defi} \label{edisc}
We call a strategy $\sigma$ \emph{$\epsilon$-discrete}
if for every $s_{n+1} = \sigma(s_0 \ldots s_n)$ it holds that
if $\dih(s_n) \leq \epsilon$ then
$\dih(s_{n+1}) \leq \dih(s_n) + \frac{\epsilon}{2^{n+1}}$, and if for each $i$
$s'_i \sim s_i$, then $\sigma(s_0 \ldots s_n) \sim \sigma(s'_0 \ldots s'_n)$.
\end{defi}

Observe that it follows directly from the definition that if
$\dih(s_0) \leq \frac{\epsilon}{2}$ and both players play discrete
strategies, then $\dih(s_{n}) \leq \epsilon (1 - \frac 1 {2^{n+1}}).$

\begin{exa}
To see that decreasing $\epsilon$ in each step is sometimes crucial,
consider the game with one variable depicted in Figure \ref{fig-eps-need}.
In each move \pzero has to choose a positive value in $(0,1)$.
\pone can then decide to continue the play or leave the
cycle and end the play with the negative accumulated
value, \ie $-y_0$, as payoff. He cannot infinitely
often decide to stay in the cycle as then the payoff
would be $\infty$ as the priority is $0$.
An $\epsilon$-optimal strategy for \pzero as the maximising player
is thus to start with $\frac \epsilon 2$ and decrease in each step.
Please note that the value of the game is $0$.

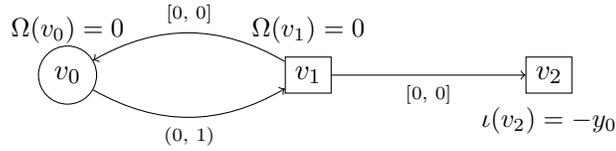
\begin{figure}[h]
\begin{center}
\begin{tikzpicture}[scale=0.8]
\node[] () at (-2, 0.75) {\small{$\Omega(v_0)=0$}};
\node[draw,circle] (a) at (-2, 0) {$v_0$};
\node[] () at (2, 0.75) {\small{$\Omega(v_1)=0$}};
\node[draw] (b) at (2, 0) {$v_1$};
\node[draw] (c) at (6, 0) {$v_2$};
\node () at (6, -0.75) {\small{$\indexi(v_2)=-y_0$}};
\path (a) edge[bend right,->] node[below] {\tiny{(0, 1)}} (b);
\path (b) edge[bend right,->] node[above] {\tiny{[0, 0]}} (a);
\path (b) edge[->] node[below] {\tiny{[0, 0]}} (c);
\end{tikzpicture}
\end{center}
\caption{Game in which the values played must decrease.}
\label{fig-eps-need}
\end{figure}
\end{exa}

We now extend the previous lemma to one that allows for the shifting of a whole move.
\begin{lem}\label{discrete_succ2}
Let $s$ be a state and $t$ a successor of $s$,
where $(s,t)$ is allowed by $l$.
Let $s'$ be a state with $\dih(s') \leq \frac 1 4$, such that $s \sim s'$. 
Then, for every $\epsilon > 0$, there exists a successor $t'$ of $s'$
allowed by $l$ such that
\begin{iteMize}{$\bullet$}
\item $s' \sim t'$ and
\item $\dih(t') \leq \dih(s') + \epsilon$.
\end{iteMize}
\end{lem}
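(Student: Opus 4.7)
The plan is to chain Lemma~\ref{shift-move} and Lemma~\ref{discrete_succ}: the former handles the combinatorial step of transferring a move from $s$ to the equivalent state $s'$ under the \emph{same} label, and the latter handles the quantitative correction needed to keep the total distance small.

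First I would apply Lemma~\ref{shift-move} to the equivalent states $s \sim s'$ and the move $(s,t)$ allowed by $l$. This yields a successor $t''$ of $s'$ such that $(s', t'')$ is allowed by the same label $l$ and $t'' \sim t$. At this stage only the equivalence and label-compatibility are secured; the total distance $\dih(t'')$ is still unconstrained and may well exceed $\dih(s') + \epsilon$.

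Next I would invoke Lemma~\ref{discrete_succ} at $s'$ with the successor $t''$; this is legal precisely because the hypothesis $\dih(s') \leq \tfrac 1 4$ matches the precondition of that lemma. The lemma then supplies a corrected successor $t'$ of $s'$, allowed by the same label $l$, with $t' \sim t''$ and $\dih(t') \leq \dih(s') + \epsilon$. Transitivity of $\sim$ finally gives $t' \sim t$, as required.

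The only technical subtlety I expect is the matching of the $\epsilon$ parameter between the two lemmas. Lemma~\ref{discrete_succ} expects $0 \leq \epsilon < \dih(s')$, so if the given $\epsilon$ already satisfies $\epsilon \geq \dih(s') > 0$, I would invoke the lemma instead with $\epsilon' = \dih(s')/2$, yielding $\dih(t') \leq \dih(s') + \epsilon' \leq \dih(s') + \epsilon$ as desired. The degenerate case $\dih(s') = 0$ needs a separate sentence: then every $\var_i(s')$ is an integer, and since $s \sim s'$ forces $[\var_i(s)] = (\var_i(s'), \var_i(s'))$ we must in fact have $s = s'$, so the choice $t' = t$ works trivially. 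Aside from this bookkeeping, no new combinatorial or arithmetic work is needed, since all of it has already been done in the two preceding lemmas.
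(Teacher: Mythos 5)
Your proposal is correct and takes essentially the same approach as the paper: chain Lemma~\ref{shift-move} to transfer the move to $s'$ under the same label, then Lemma~\ref{discrete_succ} to shrink $\dih$, and conclude by transitivity of~$\sim$. You are in fact slightly more careful than the paper's own proof, which silently applies Lemma~\ref{discrete_succ} ``for every choice of $\epsilon$'' despite that lemma's hypothesis $\epsilon < \dih(s')$; your handling of the cases $\epsilon \geq \dih(s')$ and $\dih(s')=0$ fills that small gap.
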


\begin{proof}
Since $s \sim s'$ and $t \in \successor(s)$ is allowed by $l$, we know,
by Lemma \ref{shift-move}, that there exists a state $t' \in \successor(s')$
allowed by the same label $l$, such that $t' \sim t$.
We also know from Lemma \ref{discrete_succ} that,
for every choice of $\epsilon$, there exists $t_+ \in \successor(s')$
such that $\dih(t_{+}) \leq \dih(s') + \epsilon$ and $t' \sim t_+$.
Since $t' \sim t$, this also means that $t_+ \sim t$,
hence $t_+$ fulfils the requirements above.
\end{proof}

We can conclude that discrete strategies allow for the approximation of game values.

\begin{lem} \label{discrete_strat}
Fix an $\epsilon$-discrete strategy $\rho_d$ of Player $1-i$ in $\calG$,
$\epsilon < \frac{1}{4}$. For every strategy $\sigma$ of Player $i$ there
exists an $\epsilon$-discrete strategy $\sigma_d$, such that,
for every starting state $s_0$ with $\dih(s_0) < \frac{\epsilon}{2}$,
if $\pi(\sigma, \rho_d, s_0) = s_0 s_1 \ldots$ and
$\pi(\sigma_d, \rho_d, s_0) = s_0' s_1' \ldots$,
then $s_i \sim s_i'$ for all $i$.
\end{lem}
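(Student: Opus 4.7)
The plan is to construct $\sigma_d$ inductively along the play, matching it move-by-move with $\sigma$ while maintaining the $\sim$-equivalence of states and the required distance bound. At every stage $n$, I maintain the joint invariant that $s_n \sim s_n'$ and that $\dih(s_n') \leq \frac{\epsilon}{2}(1 + \sum_{k=1}^{n}2^{-k}) < \epsilon \leq \tfrac{1}{4}$, so Lemma~\ref{discrete_succ2} is always applicable at $s_n'$.

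For the base case, take $s_0' = s_0$; the hypothesis $\dih(s_0) < \frac{\epsilon}{2}$ gives both parts of the invariant. For the inductive step, distinguish whose turn it is at $s_n$. If $s_n \in S_i$, let $s_{n+1} = \sigma(s_0 \ldots s_n)$ and let $l$ be the label witnessing the move $(s_n, s_{n+1})$. Applying Lemma~\ref{discrete_succ2} to $s_n, s_{n+1}, s_n'$ with parameter $\frac{\epsilon}{2^{n+1}}$ produces a successor $s_{n+1}'$ of $s_n'$ allowed by $l$ with $s_{n+1}' \sim s_{n+1}$ and $\dih(s_{n+1}') \leq \dih(s_n') + \frac{\epsilon}{2^{n+1}}$. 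Define $\sigma_d(s_0' \ldots s_n') = s_{n+1}'$; the distance increment of $\frac{\epsilon}{2^{n+1}}$ and the equivalence-invariance (which we can enforce by making one canonical choice per $\sim$-class of histories in $S_i$) certify that $\sigma_d$ is $\epsilon$-discrete on this history. If $s_n \in S_{1-i}$, set $s_{n+1}' = \rho_d(s_0' \ldots s_n')$. Because $\rho_d$ is $\epsilon$-discrete, the equivalence-preservation clause of Definition~\ref{edisc} applied to the two histories $s_0 \ldots s_n$ and $s_0' \ldots s_n'$ (which are pointwise $\sim$-equivalent by induction) yields $s_{n+1} = \rho_d(s_0 \ldots s_n) \sim \rho_d(s_0' \ldots s_n') = s_{n+1}'$, while the distance clause gives $\dih(s_{n+1}') \leq \dih(s_n') + \frac{\epsilon}{2^{n+1}}$, keeping the invariant.

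To see that $\sigma_d$ is a genuine $\epsilon$-discrete strategy and not merely a tree of choices along one play, I define $\sigma_d$ on every history by taking the unique representative produced by the construction above for each $\sim$-class of histories; the equivalence-preservation condition in Definition~\ref{edisc} then holds by construction. The bound $\dih(s_n') < \epsilon \leq \tfrac{1}{4}$ derived from the geometric series keeps us inside the regime where Lemma~\ref{discrete_succ2} applies, which is what makes the induction go through.

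The main obstacle is making sure both the $\sim$-equivalence and the distance bound are preserved at each step: if we only had Lemma~\ref{shift-move}, we could preserve $\sim$ but not control $\dih$, and without the distance control we would leave the region $\dih \leq \tfrac{1}{4}$ after finitely many steps and lose applicability of Lemma~\ref{discrete_succ}. The geometric-series discounting built into Definition~\ref{edisc} is exactly tailored so that the accumulated error remains bounded by $\epsilon$, and Lemma~\ref{discrete_succ2} combines both guarantees into a single step, so the induction essentially just threads these two lemmas together with the $\epsilon$-discreteness of $\rho_d$.
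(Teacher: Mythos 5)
Your proof is correct and takes essentially the same approach as the paper: an inductive move-by-move construction of $\sigma_d$, invoking Lemma~\ref{discrete_succ2} at \pzero's (resp.\ Player~$i$'s) positions to produce a $\sim$-equivalent discrete successor, maintaining the invariants that corresponding states are $\sim$-equivalent and that $\dih(s_n')$ stays below $\epsilon$, and extending $\sigma_d$ to all histories by choosing canonical representatives per $\sim$-class. If anything your write-up is a little more explicit than the paper's: you spell out the Player~$1-i$ step, observing that the equivalence-preservation clause of Definition~\ref{edisc} applied to $\rho_d$ on the two pointwise-$\sim$ histories is exactly what forces $\rho_d(s_0\ldots s_n)\sim \rho_d(s_0'\ldots s_n')$ — the paper relies on this implicitly by declaring $h'$ ``consistent with $\rho_d$ and $\sigma_d$'' without isolating that this is the property of $\rho_d$ being used. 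You also inherit the same minor indexing slack as the paper: with increments of $\frac{\epsilon}{2^{n+1}}$ starting from $\dih(s_0)<\frac{\epsilon}{2}$, the geometric series nominally reaches $\frac{3\epsilon}{2}$ rather than $\epsilon$, so to literally match the claimed bound $\epsilon(1-2^{-n-1})$ the per-step increment should be $\frac{\epsilon}{2^{n+2}}$; this is a cosmetic off-by-one present in the source and does not affect the argument.
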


\begin{proof}
We only prove this lemma for \pzero, the case of \pone is analogous.
We define $\sigma_d$ inductively. Let $s_0$ be the starting state.
If $\sigma(s_0)=s_1$, then by Lemma~\ref{discrete_succ2} there is a
$s_1' \sim s_1$ with $\dih(s_1') \leq \dih(s_0) + \frac{\epsilon}{4}$,
and we set $\sigma_d(s_0)=s_1'$.

Let $h = s_0 \ldots s_k$ and $h' = s_0' \ldots s_k'$ be finite play
histories such that $h$ is a prefix of $\pi(\sigma, \rho_d, s_0)$
and $h'$ is consistent with $\rho_d$ and $\sigma_d$ as defined thus far.
Note that $s_0 = s_0'$ and by inductive assumption $s_i \sim s_i'$ for
$0 < i \leq k$, and $\dih(s_{k}) \leq \epsilon (1 - \frac 1 {2^{k+1}})$.
If $\sigma(s_0 \ldots s_k)=s_{k+1} \in \successor(s_k)$, then,
by Lemma~\ref{discrete_succ2}, there also exists 
a state $s_{k+1}' \in \successor(s_k')$ such that $s_{k+1}' \sim s_{k+1}$ and 
$\dih(s_{k+1}') \leq \dih(s_k) + \frac {\epsilon}{2}$.
Thus, we set $\sigma_d(s_0' \ldots s_{k}')$ to $s_{k+1}'$.
For all other histories $h'' = s''_0 \ldots s''_k$ with $s''_i \sim s_i$,
we set $\sigma(h'') = s''_{k+1}$ for any $s''_{k+1}$ equivalent with $s_k$,
which exists by Lemma~\ref{shift-move}, and we can pick a discrete one
if $\dih(s_{k}'') < \epsilon$ by Lemma~\ref{discrete_succ2}.

By construction, the strategy $\sigma_d$ is discrete and if
$\pi(\sigma, \rho_d, s_0) = s_0 s_1 \ldots$ and
$\pi(\sigma_d, \rho_d, s_0)$ $= s_0' s_1' \ldots$ then $s_i \sim s'_i$.
\end{proof}

\begin{prop} \label{values_close}
Let $\calG$ be a flat interval parity game. Let $\Gamma_i$ be the set
of all strategies for player $i$ and $\Delta_i$ the set of all discrete
strategies for player $i$ and $m$ be the highest value that occurs as
a multiplicative factor in $\indexi$. 
Then it holds, for every starting state $s$, that
\[ \left| \adjustlimits\sup_{\sigma \in \Gamma_0} \inf_{\rho \in \Gamma_1}
      \pay(\pi(\sigma, \rho,s)) - 
  \adjustlimits \sup_{\sigma \in \Delta_0} \inf_{\rho \in \Delta_1}
      \pay(\pi(\sigma, \rho,s)) \right| \leq m. \]
\end{prop}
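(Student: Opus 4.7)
The plan is to establish both $V_d \le V + m$ and $V \le V_d + m$, where $V$ and $V_d$ denote the sup–inf on the left- and right-hand side of the claim, respectively. Three ingredients suffice: determinacy of $\calG$ (Theorem~\ref{mccorrect}), Lemma~\ref{discrete_strat} applied to each player in turn, and the observation that two pointwise $\sim$-equivalent plays have payoffs differing by at most $m$. For infinite $\sim$-equivalent plays the locations and hence priorities coincide throughout, so the $\pm\infty$ outcomes agree; for finite plays ending at terminal $t$ with $\indexi(\loc(t)) = a \cdot y_i + b$, $\sim$-equivalence forces $\loc(t') = \loc(t)$ and $[\var_i(t)] = [\var_i(t')]$, hence $|\var_i(t) - \var_i(t')| \le 1$ and the payoffs differ by at most $|a| \le m$. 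One also notes that $\sim$-equivalence preserves satisfaction of the integer-endpoint constraints on labels, so equivalent plays terminate in lock-step.

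For $V_d \le V + m$, I invoke determinacy to rewrite $V = \inf_{\rho\in\Gamma_1}\sup_{\sigma\in\Gamma_0}\pay(\pi(\sigma,\rho,s))$. Fixing $\epsilon > 0$ and an $\epsilon$-optimal $\rho^*\in\Gamma_1$ with $\sup_\sigma \pay(\pi(\sigma,\rho^*,s)) \le V+\epsilon$, I apply Lemma~\ref{discrete_strat} separately for each $\sigma_d\in\Delta_0$: with $\sigma_d$ fixed as the discrete opponent and $\rho^*$ as the Player~1 strategy to be discretised, one obtains $\rho_d^{\sigma_d}\in\Delta_1$ such that $\pi(\sigma_d,\rho^*,s)$ and $\pi(\sigma_d,\rho_d^{\sigma_d},s)$ are pointwise $\sim$-equivalent (the hypothesis $\dih(s_0)<\epsilon/2$ is immediate at $s_0=(v,\ol 0)$ where $\dih(s_0)=0$). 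Chaining payoff estimates,
\[
\inf_{\rho_d\in\Delta_1}\pay(\pi(\sigma_d,\rho_d,s)) \le \pay(\pi(\sigma_d,\rho_d^{\sigma_d},s)) \le \pay(\pi(\sigma_d,\rho^*,s))+m \le V+\epsilon+m
\]
holds for every $\sigma_d$; taking $\sup_{\sigma_d}$ and letting $\epsilon\to 0$ yields $V_d\le V+m$.

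The reverse bound $V\le V_d+m$ follows by a dual argument on the game $-\calG$ obtained by negating every additive and multiplicative constant of $\indexi$ and flipping the parity of every priority, so that Players~0 and~1 swap as maximiser and minimiser. Lemma~\ref{discrete_strat} and determinacy depend only on the underlying interval-game structure and transfer to $-\calG$ unchanged, and the bound $m$ (read as the supremum of $|a|$) is preserved; since $V(-\calG)=-V(\calG)$ and $V_d(-\calG)=-V_d(\calG)$, the previous step applied to $-\calG$ yields $V(\calG)\le V_d(\calG)+m$. The main obstacle I anticipate is that Lemma~\ref{discrete_strat} only delivers a discrete strategy tailored to a \emph{specific} discrete opponent, so a naive min–max swap inside $\Delta_0\times\Delta_1$ is unavailable and no ``discrete-game determinacy'' is at our disposal; converting $V$ into $\inf$–$\sup$ form through determinacy of $\calG$ and then applying the Lemma pointwise against each fixed $\sigma_d$ (and dually in $-\calG$) is precisely what sidesteps the obstacle.
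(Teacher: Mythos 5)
Your first direction ($V_d \leq V + m$ in your notation) is essentially the same idea as the paper's, just packaged as a direct chain of inequalities rather than a contradiction: rewrite $V$ in $\inf$–$\sup$ form via determinacy of $\calG$, fix a near-optimal $\rho^*\in\Gamma_1$, and discretise $\rho^*$ against each discrete $\sigma_d$ separately; the inner $\inf_{\rho_d}$ absorbs the dependence of $\rho_d^{\sigma_d}$ on $\sigma_d$. No discrete determinacy is needed there, and your diagnosis of why not is correct.

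The dual direction, however, has a gap exactly where you claim to have closed one. Passing to $-\calG$ (positions swapped, $\iota$ and priorities negated) and applying the first direction gives $V_d(-\calG) \leq V(-\calG) + m$. Unwinding: $V(-\calG) = -V(\calG)$ does follow from determinacy of $\calG$, but
\[
V_d(-\calG) \;=\; \adjustlimits\sup_{\rho_d\in\Delta_1}\inf_{\sigma_d\in\Delta_0}\bigl(-\pay\bigr) \;=\; -\adjustlimits\inf_{\rho_d\in\Delta_1}\sup_{\sigma_d\in\Delta_0}\pay,
\]
whereas
\[
-V_d(\calG) \;=\; -\adjustlimits\sup_{\sigma_d\in\Delta_0}\inf_{\rho_d\in\Delta_1}\pay.
\]
So the identity $V_d(-\calG) = -V_d(\calG)$ that you use silently is precisely the claim $\sup_{\Delta_0}\inf_{\Delta_1} = \inf_{\Delta_1}\sup_{\Delta_0}$, i.e.\ the discrete-game determinacy you assert is ``not at our disposal.'' Without it, the dual argument only yields $\inf_{\Delta_1}\sup_{\Delta_0}\pay \geq V - m$, and weak duality points the wrong way, so $V_d = \sup_{\Delta_0}\inf_{\Delta_1}\pay \geq V - m$ does not follow. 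This is not a fatal flaw as such: the paper's own proof needs the very same fact in its Case~2 and justifies it by a forward reference to Section~5, where restricting to discrete strategies is shown to correspond to playing a counter-reset game, which is determined. You should invoke that explicitly (the duality then becomes a cosmetic convenience) rather than present the argument as sidestepping it.
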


\begin{proof}
\emph{Case 1:} assume that
\[ \adjustlimits \sup_{\sigma \in \Delta_0} \inf_{\rho \in \Delta_1} \pay(\pi(\sigma, \rho,s)) - 
\adjustlimits \sup_{\sigma \in \Gamma_0} \inf_{\rho \in \Gamma_1}
      \pay(\pi(\sigma, \rho,s)) > m. \]
Then there exists a strategy
$\sigma_d \in \Delta_0$ such that 
\[  \inf_{\rho \in \Delta_1} \pay(\pi(\sigma_d, \rho,s))
  - \inf_{\rho \in \Gamma_1} \pay(\pi(\sigma_d, \rho,s) > m.\] 
Fix a strategy $\rho_{\inf} \in \Gamma_1$, for which 
 \[\pay(\pi(\sigma_d, \rho_{\inf},s)) \leq  \inf_{\rho \in \Gamma_1} \pay(\pi(\sigma_d, \rho,s))+ \epsilon.\]
From Lemma~\ref{discrete_strat}, we know, that there is a discrete
strategy $\rho_{\inf_d} \in \Delta_1$ which is a discrete version
of $\rho_{\inf}$ against $\sigma_d$. From the above, it follows that 
$\pay(\pi(\sigma_d, \rho_{\inf_d},s))- \pay(\pi(\sigma_d, \rho_{\inf},s)) > m.$ 
This is a contradiction, since we know from Lemma~\ref{discrete_strat} that
all states in both plays are equivalent, so for finite plays also the final
states are equivalent, which means that the payoffs cannot differ by more
than $m$ as it is the highest occurring multiplicative factor in $\indexi$. If
both plays are infinite, then, by definition of $\sim$, the payoffs are equal.

\emph{Case 2:} assume that \[\adjustlimits\sup_{\sigma \in \Gamma_0} \inf_{\rho \in \Gamma_1}
      \pay(\pi(\sigma, \rho,s)) - \adjustlimits\sup_{\sigma \in \Delta_0} \inf_{\rho \in \Delta_1}
      \pay(\pi(\sigma, \rho,s)) > m.\] 
By Theorem~\ref{mccorrect} every interval parity game is determined, thus
\[ \adjustlimits \sup_{\sigma \in \Gamma_0} \inf_{\rho \in \Gamma_1}
  \pay(\pi(\sigma, \rho,s)) = 
\adjustlimits \inf_{\rho \in \Gamma_1} \sup_{\sigma \in \Gamma_0}
    \pay(\pi(\sigma, \rho,s)). \]
In the next section, we show that restricting to discrete strategies
corresponds to playing a counter-reset game, and since these are again
determined games, we get that
\[ \adjustlimits \sup_{\sigma \in \Delta_0} \inf_{\rho \in \Delta_1}
  \pay(\pi(\sigma, \rho,s)) = 
\adjustlimits \inf_{\rho \in \Delta_1} \sup_{\sigma \in \Delta_0}
    \pay(\pi(\sigma, \rho,s)). \]
Therefore we can rewrite the assumption of this case as
\[\adjustlimits \inf_{\rho \in \Gamma_1} \sup_{\sigma \in \Gamma_0}
    \pay(\pi(\sigma, \rho,s)) - 
    \adjustlimits \inf_{\rho \in \Delta_1} \sup_{\sigma \in \Delta_0}
      \pay(\pi(\sigma, \rho,s)) > m.\]
Then there exists a strategy $\rho_d \in \Delta_1$ such that 
\[ \sup_{\sigma \in \Gamma_0} \pay(\pi(\sigma, \rho_d, s)) -
   \sup_{\sigma \in \Delta_0} \pay(\pi(\sigma, \rho_d, s)) > m. \] 
Fix a strategy $\sigma_{\sup} \in \Gamma_0$, for which 
\[ \pay(\pi(\sigma_{\sup}, \rho_d, s)) \geq
   \sup_{\sigma \in \Gamma_0} \pay(\pi(\sigma, \rho_d, s)) - \epsilon.\]
From Lemma~\ref{discrete_strat}, we know, that there again is a discrete
strategy $\sigma_{\sup_d} \in \Delta_0$ which is a discrete version
of $\sigma_{\sup}$ against $\rho_d$. From the above, it follows that 
$\pay(\pi(\sigma_{\sup}, \rho_d, s))- \pay(\pi(\sigma_{\sup_d}, \rho_d,s)) > m,$
which again contradicts that all states in these two plays are equivalent.
\end{proof}

\section{Counter-Reset Games} \label{sec_pushdown}

In this section, we introduce counter-reset games and show,
using the discretisation results from the previous section, 
that approximating the value of an IPG over $\Zinf$ can be
reduced to solving a counter-parity game. We then solve these
games using an algorithm from \cite{BKL12}.

By Proposition~\ref{values_close} above, we can restrict both players
in a flat IPG to use $\epsilon$-discrete strategies to approximate the value
of a flat interval game up to the maximal multiplicative factor $m$.
Multiplying the game by any number $q$ does not change the
multiplicative factors in $\iota$ but multiplies the value
of the game by $q$. Thus, to approximate the value of $\calG$
up to $\frac 1 n$ it suffices to play $\epsilon$-discrete
strategies in $n \cdot m \cdot \calG$. When players use only
discrete strategies, the chosen values remain close to integers
(possibly being up to $\epsilon$ bigger or smaller).
Whether the value is bigger, equal or smaller than an integer
can be stored in the state, as well as whether the value of
a variable is smaller or bigger than any of the (non-infinite) bounds
in constraint intervals. This way, we can eliminate both $\epsilon$'s
and constraints and are left with the following games.

\begin{defi}
A \emph{counter-reset} game is a flat interval parity game in which
in each label $l = (I, \ol{C}, R)$ the constraints $\ol{C}$ are
trivially true and the interval $I$ is either $[0, 0]$ or $[1, 1]$,
\ie either all variables are incremented by $1$ or all are left intact.
\end{defi}

\begin{exa}
In Figure \ref{counterex}, we depict a simple counter-reset game.
As usual, circles represent positions of \pzero and boxes those of
\pone. Priorities, payoff functions, intervals and reset sets are
also depicted as usual next to the corresponding nodes or above
transitions.
In this game, we have two variables, $y_0, y_1$ and as mentioned
above, there are no constraints on these variables in counter-reset
games, but they can be reset. The only choice in this game that \pzero
has is to increase all variables (``choose'' 1 from $[1,1]$) 
and \pone can do the same
or end the game and get a payoff of $-y_0$. Since he wants to minimise, his
best strategy is to loop as long as possible but not infinitely long, as 
the lowest priority on the according cycle is 0. 
Since he can achieve arbitrary small values this way, the value of this game 
(starting at $v_0$ or $v_1$) is $-\infty$.
\end{exa}

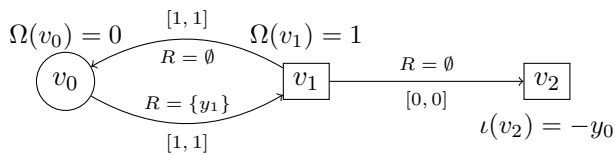
\begin{figure}[h]\label{counterex}
\begin{center}
\begin{tikzpicture}[scale=0.8]
\node[] () at (-2, 0.75) {\small{$\Omega(v_0)=0$}};
\node[draw,circle] (a) at (-2, 0) {$v_0$};
\node[] () at (2, 0.75) {\small{$\Omega(v_1)=1$}};
\node[draw] (b) at (2, 0) {$v_1$};
\node[draw] (c) at (6, 0) {{$v_2$}};
\node () at (6, -0.75) {\small{$\indexi(v_2)=-y_0$}};
\path (a) edge[bend right,->] node[below] {{\tiny{$[1, 1]$}}} node[above]
{{\tiny{$R=\{y_1\}$}}} (b);
\path (b) edge[bend right,->] node[above] {{\tiny{$[1, 1]$}}} node[below]
{{\tiny{$R=\emptyset$}}} (a);
\path (b) edge[->] node[below] {{\tiny{$[0, 0]$}}} node[above]
{{\tiny{$R=\emptyset$}}}(c);
\end{tikzpicture}
\end{center}
\caption{Simple counter-reset game}
\end{figure}

\begin{lem} \label{cr_approx}
Let $\calG$ be an IPG over $\Zinf$ with maximal absolute value of 
the multiplicative factor in $\iota$ equal to $m$. For each $n \in \bbN$
there exists a counter-reset game $\calG'_n$ such that for all states $s$
in which all variables are integers:
\[ \left| \val\calG(s) - \frac{\val\calG'_n(n \cdot m \cdot s)}
     {n \cdot m} \right| \leq \frac 1 n. \]
\end{lem}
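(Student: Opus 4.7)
The plan is to combine the scaling from Corollary~\ref{mult_games_col}, the discretisation bound of Proposition~\ref{values_close}, and an encoding of $\epsilon$-discrete play into the finite control of a counter-reset game. First, I would pass to $\calH = n \cdot m \cdot \calG$. By Lemma~\ref{mult_games} one has $\val\calH(n \cdot m \cdot s) = n \cdot m \cdot \val\calG(s)$, and because scaling affects only interval endpoints and additive factors in $\indexi$, the maximum multiplicative factor in $\calH$ is still $m$. Proposition~\ref{values_close} then bounds the loss from restricting both players to $\epsilon$-discrete strategies in $\calH$ by $m$. So it suffices to construct a counter-reset game $\calG'_n$ whose value at $n \cdot m \cdot s$ equals the $\epsilon$-discrete value of $\calH$ up to arbitrarily small error in $\epsilon$; dividing by $n \cdot m$ will then give the required $\frac{1}{n}$.

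Second, I would define $\calG'_n$ by enlarging each position of $\calH$ with a finite phase component recording, for each variable $y_i$, the sign of $\di_i$ and the relative order of the fractional parts $\{y_i\}$. This is exactly the $\sim$-equivalence data modulo the integer parts of the variables; by Lemma~\ref{shift-move} and Lemma~\ref{discrete_succ2} it is precisely the information needed to replay any discrete move inside its $\sim$-class. Since all constraints of $\calH$ have integer endpoints and discrete play keeps variables $\epsilon$-close to integers, every constraint check reduces to comparing the integer part of a variable with an integer constant, together with the known phase. I therefore hard-wire all constraint checks into the finite control, so that the remaining labels carry only trivially-true constraint vectors.

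Third, I would simulate each transition of $\calH$ labelled by an integer interval $[\alpha,\beta]$ using a gadget owned by the same player as the source position. The gadget is a short chain of $[1,1]$-increments, exitable through a $[0,0]$-transition that performs the required resets and commits to the new phase. Its length enforces the bound $\beta$ (becoming a loop with a suitable priority when $\beta=\infty$), so that the choosing player effectively picks an integer $k \in [\alpha,\beta]$, and the $\epsilon$-perturbation on top of $k$ is absorbed into the phase updates prescribed by the case analysis of Lemma~\ref{discrete_succ}. A terminal payoff $a \cdot y_i + b$ of $\calH$ then agrees with that of $\calG'_n$ up to $|a| \cdot \epsilon \leq m \cdot \epsilon$, which vanishes as $\epsilon \to 0$. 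Combining the three ingredients yields
\[ \bigl| \val\calH(n \cdot m \cdot s) - \val\calG'_n(n \cdot m \cdot s) \bigr| \leq m, \]
and dividing by $n \cdot m$ gives the claim.

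The main obstacle is the third step: the gadget priorities must penalise the owning player for staying in the loop indefinitely (so that optimal play exits when required), and the phase updates on the exit transitions must honestly mirror the three possible jump behaviours of Lemma~\ref{shift-move} and the corrections of Lemma~\ref{discrete_succ}, so that every reachable state in $\calG'_n$ corresponds to the integer part plus phase of some state reachable in a discrete play of $\calH$. Once this correspondence is established, the value equality (up to $m \cdot \epsilon$) follows from Lemma~\ref{discrete_strat} together with the determinacy invoked in Proposition~\ref{values_close}.
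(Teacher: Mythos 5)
Your proposal follows the same three-stage decomposition as the paper: pass to $\calH = n \cdot m \cdot \calG$ using Lemma~\ref{mult_games} (noting the multiplicative factors in $\iota$ are unchanged), invoke Proposition~\ref{values_close} to bound the cost of restricting to $\epsilon$-discrete strategies by $m$, and then build a finite counter-reset game whose positions carry a phase component recording the sign data needed to replay $\sim$-equivalent discrete moves. The paper constructs the intermediate game $\calG'_0$ by replacing each position $v$ with $3^M$ copies $v^{i_1\cdots i_M}$ and turning a move labelled $[n, n+k)$ into a fan of moves labelled $[l,l]$ with phase-consistency constraints, and then takes a synchronous product with a memory of size $(b+2)^M$ to eliminate constraint vectors; the final conversion from $[l,l]$-labels to $[0,0]$/$[1,1]$-labels, and the case of unbounded intervals, are left implicit there, whereas your chain-of-$[1,1]$-increments gadget (with a self-loop when $\beta = \infty$) makes exactly this step explicit. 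The one point you flag but do not close — choosing the self-loop priority so that it penalises genuine non-termination without corrupting cycles that merely traverse the gadget infinitely often — is real, but routine: give the loop a priority strictly larger than every priority of the main game, with parity bad for the owning player; a play that exits the gadget infinitely often then has its outcome governed by the lower main-game priorities it also sees infinitely often, while a play trapped forever in the loop sees only the loop priority and loses.
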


\begin{proof}
Consider first the game $\calG'' = n \cdot m \cdot \calG$.
By construction, the multiplicative factors in $\iota$ do not
change and thus their maximal value in $\calG''$ is still $m$.
By Lemma \ref{mult_games}, in all states $s$ holds
\[ \val\calG(s) = \frac{ \val\calG''(s) }{n \cdot m}. \]
Moreover, by Proposition \ref{values_close} applied to $\calG''$
\[ \left| \val\calG''(s) - \adjustlimits \sup_{\sigma \in \Delta_0}
     \inf_{\rho \in \Delta_1} \pay(\pi_{G''}(\sigma, \rho,s)) \right| \leq m, \]
and therefore
\[ \left| \val\calG(s) - \frac{\adjustlimits \sup_{\sigma \in \Delta_0}
     \inf_{\rho \in \Delta_1} \pay(\pi_{G''}(\sigma, \rho,s))}{n \cdot m} \right|
   \leq \frac 1 n. \]

We will now show how to construct the counter-reset game $\calG'$ with value equal to
$\adjustlimits \sup_{\sigma \in \Delta_0} \inf_{\rho \in \Delta_1} \pay(\pi_{G''}(\sigma, \rho,s))$,
i.e. to the value of $\calG''$ when both players play $\epsilon$-discrete strategies.
To this end, we first construct the game $\calG'_0$ which still has constraints, but
in which all intervals are $[k,k]$ for some $k \in \bbN$. The game $\calG'_0$ is
constructed from $\calG''$ by replacing each position $v$ by $3^M$ positions $v^{i_1 \ldots i_M}$.
The sequence $i_1 \ldots i_M \in \{-1, 0, 1\}^M$ keeps track, for each variable, whether
it is currently smaller, greater, or equal to an integer. The interval labels are now
converted in the following way. If a move with interval $[n, n+k)$ and resets $R$ is taken
from a position $v^{i_1 \ldots i_M}$ in $\calG'_0$ and would lead to $w$ in $\calG''$, then
a sequence of moves with labels $[l, l]$ for each $n \leq l \leq n+k$ is added,
with the $l$-labelled move leading to $w^{j_1 \ldots j_M}$ such that:
\begin{iteMize}{$\bullet$}
\item if one $j_k > i_k$ then all $j_k > i_k$ for $k \in \{0,\ldots,M\}$, and the same if $j_k<i_k$ or $j_k=i_k$,
\item if $l = n$ then each $j_k \geq i_k$ (interval was downwards-closed), and
\item if $l = k$ then each $j_k < i_k$ (interval was upwards-open). 
\end{iteMize}
The situation for open, closed, and open-closed intervals is analogous.
The plays which use discrete strategies in $\calG''$ can now be directly
transferred to plays in $\calG'_0$ in which indeed in $v^{i_1\ldots i_M}$
the sign of the fractional part of $y_j$ is equal to $i_j$. The same
can be done in the other direction, as the constraints listed above
allow to choose a value in the interval which leads to the appropriate
change in the sign sequence. Therefore 
\[ \val\calG'_0 = \adjustlimits \sup_{\sigma \in \Delta_0} \inf_{\rho \in \Delta_1} \pay(\pi_{G''}(\sigma, \rho,s)). \]

To eliminate the constraints from move labels in $\calG'_0$ we
determine the highest non-infinite bound $b$ which appears in these
constraints (both on the left and on the right side of an interval).
Then, we construct $\calG'$ as the synchronous product of $\calG'_0$
with a memory of size $(b+2)^M$ which remembers, for each variable $y_i$,
whether $y_i$ is greater than $b$ or equal to $b, b-1, \ldots, 0$. With
this memory, we resolve all constraints and remove them from move labels
in $\calG'$.
\end{proof}

Counter reset games are another representation of a class of counter parity
games, which were recently studied in \cite{BKL12}, where an algorithm to solve
such games was given, improving our previous decidability result \cite{FK11}.

\begin{thm}[\cite{BKL12}]
For any finite counter parity game $\calG$ and initial vertex $v$,
the value $\val \calG(v)$ can be computed in \textsc{6EXPTIME}.
When the number of counters is fixed, the value can be computed
in \textsc{4EXPTIME}.
\end{thm}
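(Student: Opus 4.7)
The plan is to reduce the counter parity game to a finite parity game of at most doubly exponential size, and then invoke a standard parity-game solver. In a counter-reset game the counters change only by $+1$ or $0$ and by resets to $0$, so the infinite state space is stratified by the counter tuple. The payoff on terminals, $\iota(v) = a \cdot y_i + b$, is linear with bounded coefficients, and the payoff on infinite plays depends only on the parity condition, not on counter values at all.

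First, I would establish a cut-off bound $B$ computable from the number of vertices, the number of priorities, and the magnitudes of the constants appearing in $\iota$. The intuition is that once every counter exceeds $B$, the exact values stop mattering: either the play is already committed to an infinite run whose value is parity-determined, or the sign and relative ordering of the possible terminal payoffs $a y_i + b$ are fixed, so the optimal terminating choice no longer depends on precise counter values. A pumping argument on loops of the underlying graph should give such a $B$ of at most exponential size, since a loop returning to the same vertex with strictly larger counters can be unrolled a controlled number of times without changing the winner.

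Next, I would build the abstraction. States of the finite parity game are pairs consisting of an original vertex and a truncated counter tuple in $\{0,1,\ldots,B,\top\}^M$, where $\top$ marks ``above~$B$''. Transitions mirror the original ones: resets send a component to $0$, and increments either advance a value or leave $\top$ fixed. Priorities are inherited from $\Omega$. The abstraction has size $|V| \cdot (B+2)^M$, which is doubly exponential in $M$ and only singly exponential otherwise, matching the shape of the stated \textsc{6EXPTIME} and \textsc{4EXPTIME} bounds after applying a standard parity-game algorithm (such as Zielonka's or a progress-measure algorithm) to the abstraction.

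The main obstacle will be the cut-off argument in the presence of a nontrivial parity condition, because a player may try to delay termination in order to drive counters through particular thresholds that favour them, and the two objectives (parity on infinite plays and linear payoff on terminals) must be reconciled. The delicate point is to prove that any such strategic use of large counter values can already be realised while keeping all counters below $B$, by compressing loops without losing parity-winning behaviour. If full counter-blindness above $B$ is too strong, I would fall back on a finite-memory reduction whose memory encodes the necessary loop structure and is absorbed into the abstraction state, at the cost of an additional exponential blow-up that is still compatible with the claimed complexity.
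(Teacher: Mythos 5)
This theorem is not proved in the paper you are reading: it is imported verbatim from \cite{BKL12} (``Solving counter parity games'') and simply invoked. There is therefore no ``paper's own proof'' to compare against, and what you are really doing is attempting to re-derive the main theorem of a different paper. Judged on its own terms, your sketch has a genuine gap exactly where you flag it: the cut-off bound $B$.

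The claim that ``once every counter exceeds $B$, the exact values stop mattering'' is the whole theorem, and it is not obviously true. Terminal payoffs are of the form $a\cdot y_i+b$, so the \emph{magnitude} of the payoff grows without bound as $y_i$ grows; collapsing everything above $B$ to a symbol $\top$ loses precisely the quantity you must output. For the truncation to be value-preserving you would first have to show that (i) it is decidable whether the value is $+\infty$ or $-\infty$, and (ii) when the value is finite, some pair of optimal (or $\epsilon$-optimal) strategies keeps all counters below a computable bound. Point (i) is a boundedness problem for a system with both resets and a parity condition, and is where all the difficulty of \cite{BKL12} lives; your pumping sketch does not address it, because a loop that pumps a counter need not preserve the least priority seen, and a player can interleave pumping with parity-relevant detours in ways that a naive unrolling does not capture. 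Point (ii) does not follow from (i) without an explicit strategy-transfer argument in the presence of the parity objective, which is the ``main obstacle'' you name but do not resolve. Your fallback (adding finite memory encoding loop structure) is again the statement of what needs to be proved, not a proof.

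A further warning sign: your abstraction has size $|V|\cdot(B+2)^M$ with $B$ ``at most exponential,'' i.e.\ doubly exponential overall, and solving a parity game of that size is at most doubly exponential. If this worked it would give a $2$\textsc{EXPTIME} algorithm, substantially better than the $6$\textsc{EXPTIME}/$4$\textsc{EXPTIME} bounds in the statement. That mismatch strongly suggests the reduction as sketched is not sound. The actual argument in \cite{BKL12} goes through automata-theoretic machinery for boundedness (regular cost functions and related $\omega B$-style conditions), and the stacked exponentials in the stated bound come from those translations, not from a single counter truncation plus a parity solver.
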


\begin{cor} \label{cr_value}
For any finite counter reset game $\calG$ with a starting state $s$ where all
counters are integers, the value $\val \calG(s)$ can be computed in
\textsc{6EXPTIME}. With fixed number of counters, the value
can be computed in \textsc{4EXPTIME}.
\end{cor}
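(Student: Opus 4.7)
The plan is to derive Corollary~\ref{cr_value} as an essentially direct consequence of the cited theorem from~\cite{BKL12}, by exhibiting counter-reset games as a special case of counter parity games. First I would verify that a counter-reset game, as defined in Section~\ref{sec_pushdown}, fits the syntactic shape of a counter parity game: the game graph, partition into positions of the two players, and the priority function $\Omega$ are already given; the labels reduce to simultaneous increment-by-one moves ($I=[1,1]$) or no-op moves ($I=[0,0]$), together with a reset set $R$ selecting which counters are zeroed. Since constraints $\ol{C}$ are trivially true in a counter-reset game, each label can be read off as one of the standard counter operations (increment, reset, or neither). The payoff in terminal positions is an affine function $a\cdot y_i+b$ of a single counter, which is exactly the type of payoff handled in~\cite{BKL12}, and infinite plays are valued by the parity of the lowest priority seen infinitely often.

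Next I would address the initial state $s$. The theorem of~\cite{BKL12} is stated for an initial vertex $v$ with all counters at $0$, whereas Corollary~\ref{cr_value} allows arbitrary integer starting values $\ol{c}\in\setZ^M$ for the counters. I would handle this by prepending, for each coordinate $i$, a chain of $|c_i|$ forced increment moves (or equivalently, a gadget that performs $c_i$ increments of the $i$-th counter before entering the position $\loc(s)$), and then reducing from the augmented initial vertex. Because the starting values in our application come from states reached after multiplying the original IPG by $n\cdot m$ (cf.\ Lemma~\ref{cr_approx}), the binary encoding of $\ol{c}$ is polynomial in the input size, so this preprocessing fits within the promised complexity bounds and does not change the value of the play.

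With these two observations the corollary follows: apply the theorem of~\cite{BKL12} to the resulting counter parity game to obtain $\val\calG(s)$ in \textsc{6EXPTIME}, respectively \textsc{4EXPTIME} when the number $M$ of counters is fixed (which carries over, since our gadget adds no new counters). The only step that requires genuine care, and which I would treat as the main obstacle, is the faithful translation between the label semantics of counter-reset games (where \emph{all} non-reset variables must be incremented in lockstep) and the per-counter operations used in~\cite{BKL12}; I would spell out explicitly how a simultaneous increment is encoded as a single macro-move whose effect on each counter is either $+1$ or reset, and verify that this encoding preserves both the set of strategies and the value of every play, so that the payoff of affine terminal functions and the parity condition are both preserved verbatim.
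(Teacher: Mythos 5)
Your proposal matches the paper's intent: the paper states the corollary with no explicit proof, treating it as an immediate reformulation of the cited theorem of~\cite{BKL12}, which is exactly what you do. Your observation that a counter-reset game's labels (simultaneous $+1$ or no-op, plus a reset set, with trivially true constraints) are just one syntactic presentation of counter parity game moves is the core point the paper takes for granted, and your insistence on spelling out the lockstep-increment encoding is sensible diligence rather than a deviation.

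One caveat worth noting about your treatment of non-zero starting counters: if the initial counter vector $\ol{c}$ is given in binary, your chain-of-$|c_i|$-increments gadget produces a game that is exponentially larger than the input, which would push the bound from \textsc{6EXPTIME} to \textsc{7EXPTIME}. This does not actually bite in the paper's pipeline, because the starting state there is always $\ol{0}$ (the IPG starts with all variables zero, flattening and multiplying keep it zero, and $n\cdot m\cdot\ol{0}=\ol{0}$), so the gadget is vacuous. But as you have stated the reduction, the complexity claim would need either the hypothesis $s=\ol{0}$, a more compact gadget, or the knowledge that the algorithm of~\cite{BKL12} already accepts arbitrary integer initial counter values. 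You should make one of these explicit rather than appealing loosely to the bound being large enough to absorb it.
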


\section{Conclusions and Future Work}

We conclude by completing the proof of our main Theorem~\ref{mainthm}.
We first observe that, by Theorem~\ref{mccorrect}, evaluating a
\Qmu-formula on a system is equivalent to calculating the value of the 
corresponding model-checking game. We can then turn this game into a
flat one by Lemma~\ref{flat} and then into one over $\Zinf$ by
Corollary~\ref{mult_games_col}.
By Lemma~\ref{cr_approx} the value of such a game can be approximated
with arbitrary precision by counter-reset games, which we can solve
by Corollary~\ref{cr_value}.

All together, we proved that it is possible to approximate the values
of quantitative $\mu$-calculus formulae on initialised linear hybrid
systems with arbitrary precision. With the recent result on counter parity
games, we are even able to provide an elementary algorithm -- as the game
$\calG'_n$ in Lemma~\ref{cr_approx} is doubly-exponential in $\calG$ and $n$,
the combined complexity of the above procedure is \textsc{8EXPTIME}
(note the doubly-exponential increase compared to Corollary~\ref{cr_value}).

This complexity is very high and the complexity bound is not tight,
thus we can formulate two immediate open problems: (1) can the exact
value of $\semk{\phi}$ be computed? (2) what is the exact complexity
of such a computation or its approximation? 
Another open question is whether we can use our methods for more general
classes of games, e.g. for games with more complex payoff functions such as
mean-payoff interval games. Furthermore, we are not only interested in
theoretical complexity bounds but also in the practical applicability of
quantitative model checking. This will require a more thorough algorithmic
analysis of the problem. Also, since we reduce the problem to counter parity
games, the implementation of a solver for this class of games is a necessary
first step before we can exploit the methods presented in this paper in
practice. However, even with further research needed to answer these challenges,
our result lays the foundation for using quantitative temporal logics in
the verification of hybrid systems.

\end{document}